\def\be{\begin{equation}}
\def\ee{\end{equation}}
\def\>>{\rangle}
\def\fr12{\frac{1}{2}}
\def\bx{{\bf x}}
\def\bz{{\bf z}}
\def\bw{{\bf w}}
\def\bs{{\bf s}}
\def\cC{\mathcal C}
\def\cQ{\mathcal Q}
\def\cH{\mathcal H}
\def\cM{\mathcal M}
\def\cS{\mathcal S}
\def\cW{\mathcal W}
\def\cY{\mathcal Y}
\def\bC{\mathbb C}
\def\bZ{\mathbb Z}
\def\Rc{\check{R}}
\def\pp{\phantom{.}}
\def\Y1{Y^{(1)}}
\DeclareMathOperator{\End}{End}
\newtheorem{theorem}{\emph {Theorem}}
\newtheorem{proposition}{\emph {Proposition}}
\begin{document}	

\title[ASEP with open boundaries and Koornwinder
polynomials]{Asymmetric Simple Exclusion Process with open boundaries
  and  Koornwinder polynomials} 
\author[L. Cantini]{Luigi Cantini}
\address{LPTM, Universit\'e de Cergy-Pontoise (CNRS UMR
    8089), 
Cergy-Pontoise Cedex, France.}
\date{\today}
\email{luigi.cantini@u-cergy.fr}
\thanks{The work of LC is partially supported by CNRS through a
  ``Chaire d'excellence''. It is a pleasure to thank Jan de Gier for
  collaboration at an erly stage of this project and the Departement
  of Mathematics and Statistics of the University of
  Melbourne for kind hospitality.} 

\begin{abstract}

In this paper we analyze the steady state of the Asymmetric Simple
Exclusion process with open boundaries and second class particles by
deforming it through the introduction of spectral parameters. The
(unnormalized) probabilities of the particle configurations get
promoted to Laurent polynomials in the spectral parameters and
are constructed in terms of non-symmetric Koornwinder polynomials. In
particular we show that the partition function coincides with a
symmetric Macdonald-Koornwinder polynomial. As an outcome we compute
the steady current and the average density of first class particles.

\end{abstract}

\begin{titlepage}

\maketitle

\end{titlepage}




\section{Introduction}

The Asymmetric Simple Exclusion Process (ASEP) is a paradigmatic example of
an out of equilibrium systems
\cite{spitzer1970interaction,derrida1998exactly,chou2011non,derrida2007non}. In
its simplest form the ASEP consists of particles located on the sites
of a directed one dimensional lattice under the condition that each
site can be occupied by at most one particle, so that the local
configurations on each site can be denoted by a $\bullet$ for an
occupied site and by a $\circ$ for an empty site. The particles are
subject to a stochastic evolution which consists of jumps of 
a particle on an empty neighboring sites with
rates $t^\fr12$ or $t^{-\fr12}$ depending on the direction of the jump. 
\begin{align*}
\bullet \circ \xrightarrow{t^{\fr12}} \circ \bullet &&
\circ \bullet  \xrightarrow{t^{-\fr12}} \bullet \circ 
\end{align*}
Despite its apparent simplicity, this model can be successfully employed to
describe or at least to capture the main features of very different
physical systems. Indeed the ASEP has appeared 
for the first time in biology \cite{macdonald1968kinetics} but since
then it has found applications in the 
study of a wide range of physical phenomena: traffic flow, surface
growth, sequence alignment, etc. (see \cite{chou2011non} for a recent review of many of these applications). 

The ASEP can be in particular used to model
the exchange of particles
between two reservoirs at different chemical potential \cite{derrida2007non}.
In this case one considers a lattice of finite length $N$ and the
particles can be exchanged with a left and a right reservoir \cite{sandow1994partially}. So,
besides a rate for right and left jumps of a particle, we have also
rates $\alpha, \gamma$, respectively for a particle injected in or
removed from the leftmost site and $\delta,\beta$ for a particle
injected in or removed from the rightmost site.  
\begin{center}
\vskip .7cm
\includegraphics[width= 4truecm]{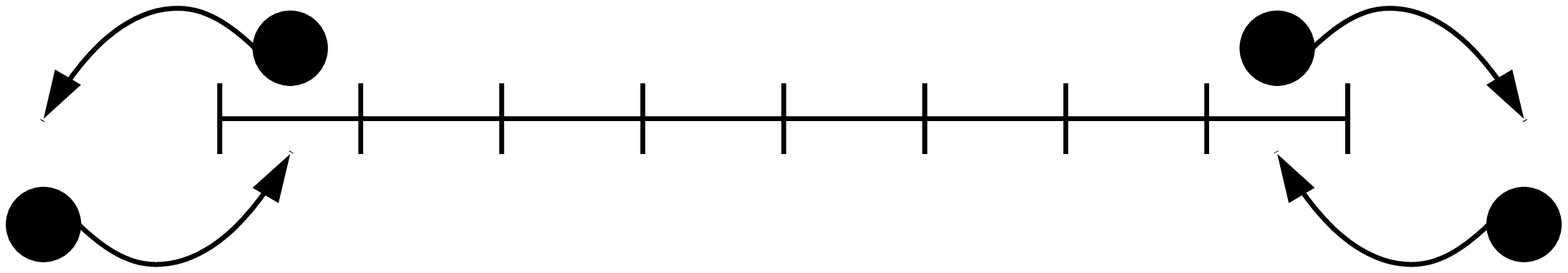}
\vskip .6cm
\begin{picture}(0,0)
\put(-50,20){$\alpha$}
\put(45,20){$\delta$}
\put(-50,60){$\gamma$}
\put(45,60){$\beta$}
\end{picture}
\end{center}
\vskip -.3cm
Apart for its physical interest, this model has recently drawn the
attention also of combinatorialists. Indeed the (unnormalized) stationary
probability of the particle configurations are given by enumerations
of certain classes of tableaux  
\cite{duchi2005combinatorial,corteel2007tableaux,corteel2011tableaux}.
  
The model we are going to consider here has as an extra feature, the
presence of \emph{second class particles}, which are confined in the
finite lattice, i.e. that are not 
exchanged with the reservoirs at the boundaries. 
These second class particles can be thought as some sort of 
mobile impurities and will be indicated by $\ast$. Their jump rates on
empty sites, or exchange rates with usual (\emph{first class})
particles are given by
\begin{align*}
\ast \circ \xrightarrow{t^{\fr12}} \circ \ast && \bullet \ast
\xrightarrow{t^{\fr12}} \ast \bullet &\\
\circ \ast \xrightarrow{t^{-\fr12}} \ast \circ && \ast\bullet
\xrightarrow{t^{-\fr12}} \bullet\ast & 
\end{align*}
We shall call this model for short ``open 2ASEP''. 

The Markov chain of the 2ASEP turns out to be integrable for a generic
choice of the parameters introduced above. Actually it is not
difficult to show that the ASEP (in absence of second class particles)
is algebraically equivalent to a spin 
$\frac{1}{2}$ chain of XXZ type with non diagonal boundaries. 
A great deal of effort has been devoted to exploit the integrable
structure of the ASEP, either through Bethe ansatz techniques or
separation  of variables or other powerfull methods of the theory of integrable
systems, in order to study the full spectrum of the ASEP Markov
matrix. Reviewing all these studies goes beyond  the scopes of the present
paper, the interested reader can have a look at the nice introduction
of \cite{lazarescu2014bethe} and references therein.

A different approach, which goes under the name of Matrix Product
Ansatz \cite{derrida1993exact}, has proven to be more successful in computing
the properties of the asymptotic steady state of the system 
\cite{blythe2007nonequilibrium}. This approach consists in writing the
stationary probability of any particles configuration as a trace of
products of some auxiliary (infinite dimensional) matrices that satisfy a certain
algebra. For concrete computations one has to find manageable
representations of the matrix algebra. 

In absence of second class particles, Uchiyama, Sasamoto and Wadaty
\cite{uchiyama2004asymmetric} have  provided a representation of the
Matrix algebra in terms of the 
Askey-Wilson polynomials, which allows for the exact computation at
finite size of the steady state current and of its the density. These
quantities are given by rather simple expression involving the moments
of the Askey-Wilson measure. Combining these results of  
\cite{uchiyama2004asymmetric} with a remarkable combinatorial
expression for the stationary measure of particle configurations in
terms of staircase tableaux \cite{corteel2011tableaux}, Corteel and Williams
have provided a very interesting
combinatorial formula for the moments of the Askey-Wilson polynomials \cite{corteel2011tableaux}
(see also \cite{corteel2012formulae}). 

In \cite{uchiyama2008two}, Uchiyama has tackled the case in presence of
second class particles, using a representation of the
matrix algebra in terms of a t-boson. He was able to compute the so
called ``partition function'' as a certain contour integral and from it
to compute the 
phase diagram of the model in the thermodynamic limit. Nonetheless his
finite size formulas are quite complicated if compared with the one
obtained in absence of second class particles in
\cite{uchiyama2004asymmetric} using the Askey-Wilson polynomials.

In the present paper we put forward an approach to
the computation of the steady state of the open
2ASEP that bypasses the Matrix Product Ansatz. Such an approach has
been pioneered by Di~Francesco and Zinn-Justin in the context of the
stochastic dense $O(1)$ loop model  \cite{di2005around} and  
is based just on the integrability of the model.
In this introduction we are going
to present just the rough idea behind this approach, whose details are
discussed in the body of the paper.

Given our choice of jump rates we know that the open 2ASEP is integrable, which
means in particular that its Markov matrix $\cM$ can be obtained
starting from the so called \emph{scattering matrices} $\cS_i$. These
matrices will be defined in Section 
\ref{integrability}, here we only remark that they depend on a set of
spectral parameters $\bz= \{z_1,\dots,z_N\}$, where $N$ is length of
the lattice and the the parameter $z_i$ is ``attached'' the $i$-th
site. The vector $\Psi_{N,m}$,  gathering the stationary probabilities
of each 2ASEP configurations on a system of length $N$ in the sector
with $m$ second class particles, can be obtained by the eigenvector
$\Psi_{N,m}(\bz)$ of the scattering matrices
$$
\cS_i(\bz)\Psi_{N,m}(\bz)= \Psi_{N,m}(\bz)
$$
upon specialization $z_i=1$
$$
\Psi_{N,m}= \Psi_{N,m}({\bf 1}).
$$
The advantage of dealing with $\Psi_{N,m}(\bz)$ is that, this vector can be
normalized in such a way that it is a Laurent polynomial in the
variables $\bz$ and it satisfies a set of exchange-reflection 
relations that correspond to a degeneration of the quantum Knizhnik-Zamolodchikov (qKZ)
equations of type $\tilde \cC_N$ (see
\cite{kasatani2009boundary,shigechi2014laurent} for some recent
results on  
polynomial solutions of qKZ equations of type $\tilde \cC_N$). We
shall construct the solution of the qKZ equations using non-symmetric
Koornwinder polynomials and study some of their properties (notice the
recent work \cite{stokman2014koornwinder} on the relations between
Koornwinder polynomials and the XXZ spin chain). 
We shall prove that the normalization of $\Psi_{N,m}(\bz)$
(sometimes called the ``partition function'') is
given by a symmetric Macdonald-Koornwinder polynomial for a single
column partition of length $N-m$. 
From the weighted partition function we shall extract current and
average density of first class particles. Finally, 
using a known integral representation for the  Macdonald-Koornwinder
polynomials we shall compute these quantities in the thermodynamic
limit, in the case of a fixed density of second class particles,
recovering the phase diagram of \cite{uchiyama2008two}.  

\vskip 1cm

The paper is organized as follows. In Section \ref{integrability} we
shall discuss the integrability of the open 2ASEP. In particular we
explain how to make to model inhomogeneous 
by introducing the spectral parameters $\bz$ and show in Theorem
\ref{scatt-to-exch} that the inhomogeneous generalization of the
stationary probability $\Psi(\bz)$ satisfies a set of
exchange-reflection relations. 
The algebraic structures behind these equations will
be discussed in Appendix \ref{algebraic}, where we also recall some
basic facts about symmetric Macdonald-Koornwinder polynomials,
non-symmetric Koornwinder polynomials and Askey-Wilson polynomials.
In Section \ref{analysis} we analyze the solution of the qKZ
equations. We show in Theorems \ref{theo-mac-koorn}, \ref{gener-funct} that the
normalization of  $\Psi(\bz)$ and the generating function for the
total density of first class particles are given by symmetric
Macdonald-Koornwinder polynomials. In Section \ref{sect-rec-rel} we
shall discuss some recursion relations between solutions of the
exchange-reflection equations for systems of different size and different values
of the parameters $a,b,c,d$.  
In Section \ref{sect-phase} we compute the average current and density
of first class particles, we shall then take the thermodynamic limit and compute the
phase diagram for the model.

\section{Integrability and exchange-reflection equations}\label{integrability}

The Markov matrix which generates the time evolution of the 2ASEP with
second class particles can be written as
a sum of local terms. Let us associate to each site a vector space
$V\simeq \bC^3$ with basis vectors $\{v_{-1},v_{0}, v_{1}\}$ which
correspond to the following local particle configurations
\begin{align*}
v_{-1}&\longleftrightarrow \circ \textrm{~~~~~empty site}\\
v_{0}&\longleftrightarrow \ast \textrm{~~~~~second class particle}\\
v_{1}&\longleftrightarrow \bullet \textrm{~~~~~first class particle}
\end{align*}
The vector space with a basis labeled by global particle configurations has a
tensor product structure $H_N=V_1\otimes V_2\otimes \cdots \otimes
V_N$, with $V_i \simeq V \simeq \bC^3$.
The local contributions to the Markov chain are given by the
following operators ${\bf e}\in \End 
(V\otimes V)$, 
${\bf f}(\rho,\sigma) \in \End(V)$
\be
\begin{split}
&{\bf e}= \sum_{i,j=-1}^1\left(E^{(i,j)}\otimes
  E^{(j,i)}-E^{(j,j)}\otimes
  E^{(i,i)}\right)t^{\frac{1}{2}\textrm{sign}(j-i)}\\ 
&{\bf f}(\rho,\sigma) = \rho \left(E^{(-1,1)}-E^{(1,1)}\right) +\sigma
\left(E^{(1,-1)}-E^{(-1,-1)}\right)  
\end{split}
\ee
where the operators $E^{(j,i)}\in \End(V)$ are defined by
$E^{(j,i)}v_k=\delta_{j,k}v_i$. 
The Markov matrix $\cM$ reads 
\be 
\cM= \sum_{i=1}^{N-1} {\bf e}_i + {\bf f}_1(\gamma,\alpha) +{\bf f}_N(\beta,\delta).
\ee
where 
$$
{\bf e}_i= {\bf 1}_1\otimes \cdots \otimes {\bf 1}_{i-1}\otimes {\bf e}
\otimes {\bf 1}_{i+2} \otimes \cdots \otimes {\bf 1}_{N}
$$
$$
{\bf f}_i(\rho,\sigma)= {\bf 1}_1\otimes \cdots \otimes {\bf 1}_{i-1}\otimes {\bf f}(\rho,\sigma)
\otimes {\bf 1}_{i+1} \otimes \cdots \otimes {\bf 1}_{N}
$$
Since the dynamics preserves the number of second class particles, we have a
splitting of $H_N$ as a direct sum 
$$
H_N=\bigoplus_{m=0}^N H_{N,m}
$$
of subspaces (sectors) $H_{N,m}$ with fixed number $m$ of second class
particles, which are preserved by the action of $\cM$.

\vskip .5cm

A first glimpse of the algebraic structure behind the Markov matrix
$\cM$ comes from the observation that the operators ${\bf e}_i, {\bf
  f}_1(\gamma,\alpha)$ and ${\bf f}_N(\beta,\delta)$ generate a
representation of  $\cH_N$, the affine Hecke algebra of type
$\hat \cC_N$. Indeed, the operators $T_0,T_1,\dots,T_N$ defined as 
$$
T_0 =  \alpha^{-\fr12}\gamma^{-\fr12}{\bf f}_1(\gamma,\alpha) +
\alpha^{\fr12}\gamma^{-\fr12}  {\bf 1}
$$ 
$$
T_N =  \beta^{-\fr12}\delta^{-\fr12}{\bf f}_N(\beta,\delta) +
\beta^{\fr12}\delta^{-\fr12}  {\bf 1}
$$ 
and for $1 \leq i \leq N-1$
$$
T_i = {\bf e}_i+ t^{-\fr12}{\bf 1}
$$
satisfy the commutation relations of the generators of $\cH_N$ (see
Section \ref{ahaCN})
\begin{align}
T_i-T_i^{-1}&= t_i^{\fr12}-t_i^{-\fr12}\\
T_iT_j&=T_jT_i ~~\textrm{if}~~|i-j|>1\\ 
T_i T_{i+1} T_i &= T_{i+1} T_i T_{i+1}~~\textrm{if}~~i\neq 0,N-1\\
T_0 T_1 T_0 T_1 &= T_1 T_0 T_1 T_0\\
T_N T_{N-1} T_N T_{N-1} &= T_{N-1} T_N T_{N-1}T_N
\end{align}
with $t_0^{\fr12}= \alpha^{\fr12}\gamma^{-\fr12}, t_N^{\fr12}=
\beta^{\fr12}\delta^{-\fr12}$ and $t_i=t$ for $1\leq i\leq N-1$.

\vskip .5cm

Operators written as sums of generators of $\cH_N$, like $\cM$ are
known to be quantum integrable
\cite{sklyanin1988boundary,doikou2010introduction}. 
In what follows we shall present only a few notions of the theory of
quantum integrable systems needed in order to study the
properties of the stationary measure.

The first ingredient are the $\Rc$ matrices based on the baxterization of the Hecke algebra of
type $A_n$, which read
\be
\label{Ri}
\Rc_{i}(z)={\bf 1} +
\frac{z-1}{t^{\fr12}z-t^{-\fr12}} {\bf e}_i.  
\ee
They satisfy the braided Yang-Baxter equation (YBE)
$$
\Rc_{i}(yz^{-1})\Rc_{i+1}(xz^{-1})\Rc_{i}(xy^{-1})=
\Rc_{i+1}(xy^{-1})\Rc_{i}(xz^{-1})\Rc_{i+1}(yz^{-1}) 
$$
and the so called unitarity condition
$$
\Rc_{i}(z)\Rc_{i}(z^{-1})= {\bf 1}.
$$
The second ingredient are the boundary scattering matrices. A
classification of all the integrable boundary scattering matrices
compatible with the matrices $\Rc_{i}(z)$ has been obtained
recently \cite{crampe2014open}, here we shall consider the case that
corresponds to the baxterization of
the boundary matrices ${\bf f}_1(\delta,\alpha)$ and ${\bf
  f}_N(\beta,\gamma)$. For $t\neq 1$ it turns out to be convenient to
parametrize the boundary rates as 
\be\label{parametr}
\begin{split}
\alpha= \frac{(t^{\fr12}-t^{-\fr12})ab}{(a-1)(b-1)},~~~~~\gamma=
\frac{t^{-\fr12}-t^{\fr12}}{(a-1)(b-1)} \\
\beta= \frac{(t^{\fr12}-t^{-\fr12})cd}{(c-1)(d-1)},~~~~~\delta=
\frac{t^{-\fr12}-t{\fr12}}{(c-1)(d-1)} 
\end{split}
\ee
This means that for $t>1$ we can choose $a,c<0$ and $b,d>1$, while for
$0<t<1$ we have $a,c<0$ and $0<b,d<1$.  
The boundary scattering matrices read
\be\label{KL}
K_1(z|a,b)= {\bf 1} +
\frac{(z^2-1)}{(z- a)(z-
  b) }\gamma^{-1}{\bf f}_1(\gamma,\alpha), 
\ee
\be\label{KR}
K_N(z|c,d)= {\bf 1} +
\frac{(1-z^2)}{(c
  z-1)(d z-1) }\delta^{-1}{\bf f}_N(\beta,\delta).
\ee
They satisfy the boundary Yang-Baxter equations (BYBE), also called reflection
equations \cite{sklyanin1988boundary,cherednik1984factorizing}
\begin{multline}
\Rc_{1}(xy^{-1})K_1(y)\Rc_{1}(x^{-1}y^{-1}))K_1(x) =\\
K_1(x)\Rc_{1}(x^{-1}y^{-1})K_1(y)\Rc_{1}(xy^{-1}), 
\end{multline}
\begin{multline}
\Rc_{N-1}(xy^{-1})K_N(x)\Rc_{N-1}(xy)K_N(y) =\\
K_N(y)\Rc_{N-1}(xy)K_N(x)\Rc_{N-1}(xy^{-1}),
\end{multline}
and the unitarity conditions
$$
K_1(x)K_1(x^{-1})= {\bf 1},~~~~~K_N(x)K_N(x^{-1})= {\bf 1}.
$$
Here is the explicit form of the $K$ matrices acting respectively on
the first and last site
\begin{equation}\label{KLexpl}
K_1(z|a,b)= 
\left(\begin{array}{ccc}
1 & 0 & 0\\ 
0 & 1 & 0\\ 
0 & 0 & 1
\end{array}\right) +
\frac{z^2-1}{(z- a)(z-
  b) }\left(
\begin{array}{ccc}
-1 & 0 & - ab\\
0 & 0 & 0\\ 
1 & 0 & ab
\end{array}
\right), 
\end{equation}
\begin{equation}\label{KRexpl}
K_R(z|c,d)= 
\left(\begin{array}{ccc}
1 & 0 & 0\\ 
0 & 1 & 0\\ 
0 & 0 & 1
\end{array}\right)+
\frac{1-z^2}{ (cz-1)(dz-1) }
\left(
\begin{array}{ccc}
cd  & 0 & 1\\
0 & 0 & 0\\
-cd & 0 & -1\\
\end{array}
\right).
\end{equation}

The most common way to exploit the algebraic properties of the $\Rc$
and $K$ matrices is to use such matrices as building blocks of the so
called double row transfer matrix, which depends on a 
spectral parameter $y$ and which commute with the matrix $\cM$ for any
values of $y$. The diagonalization of $\cM$ is then turned into the
diagonalization of the transfer matrix \cite{sklyanin1988boundary}.

Here instead we take a slightly different approach, which will work
effectively for the the determination 
of the stationary state of $\cM$, and is analogous to the one already
employed in \cite{cantini2009qkz}, for the stochastic dense $O(1)$
loop model with open boundaries. Instead of using double row
the transfer matrix, we use the so called Scattering Matrices
which are defined by
\begin{multline}
\cS_i(\bz) = \Rc_{i-1}(z_i z_{i-1}^{-1})\cdots\Rc_{2}(z_i z_{2}^{-1})\Rc_{1}(z_i z_{1}^{-1})\\
K_1(z_i^{-1})\Rc_{1}(z_i z_{1})\cdots \Rc_{i-2}(z_i z_{i-2})\Rc_{i-1}(z_i z_{i-1})\\
\Rc_{i}(z_i z_{i+1})\cdots \Rc_{N-2}(z_i z_{N-1})\Rc_{N-1}(z_i
  z_{N})  \\
K_N(z_i)\Rc_{N-1}(z_i/z_{N})\cdots\Rc_{i+1}(z_i/z_{i+2})\Rc_i(z_i/z_{i+1}), 
\end{multline}
where $\bz =\{z_1,z_2,\dots,z_N\}$.
Even though the matrices  $\cS_i(\bz)$ do not commute with $\cM$, they
commute among themselves 
$$
[\cS_i(\bz),\cS_j(\bz)]=0,
$$
and in the limit $\bz \rightarrow {\bf 1}$ 
their eigenvectors coincide with the eigenvectors of $\cM$.
Indeed, let $\Psi(\bz)$ be a common eigenvector of $\cS_i(\bz)$, with
eigenvalues $\Lambda_i(\bz)$
\be\label{scatt-0}
\cS_i(\bz)\Psi(\bz)=\Lambda_i(\bz) \Psi(\bz).
\ee
Call $\cS_i(z):=\cS_i(z_{j\neq i}=1,z_i=z), \Lambda(z)=\Lambda(z_{j\neq i}=1,z_i=z)$ and
$\Psi_i(z):= \Psi(z_{j\neq i}=1,z_i=z)$, so that the specialization $z_{j\neq
  i}=1$ and $z_i=z$ of eqs.(\ref{scatt-0}) reads
$$
\cS_i(z)\Psi_i(z)=\Lambda_i(z)\Psi_i(z). 
$$
Differentiating this equation with respect to $z$ and then setting $z=1$
one gets 
$$
\cS_i'(1)\Psi_i(1)+\cS_i(1)\Psi_i'(1)=\Lambda_i(1)\Psi_i'(1)+\Lambda'_i(1)\Psi_i(1)
$$
but $\Psi_i(1)=\Psi({\bf 1}), \cS_i(1)={\bf 1}, \Lambda_i(1)=1$ and it s simple to check that 
\be
\cS_i'(1) = \frac{2}{t^{\fr12}-t^{-\fr12}}\cM.
\ee 
In this way we have turned the problem of the diagonalization of $\cM$
into the
simultaneous diagonalization of the Scattering Matrices
$\cS_i(\bz)$. 

The common eigenvectors of $\cS_i(\bz)$ have nice covariance
properties under the action of the $\Rc$ and $K$ matrices.
Let $\Psi(\bz)$ be a common eigenvector of $\cS_i(\bz)$  
with eigenvalues $\Lambda_i(\bz)$ and define
\begin{align*}
\Psi_i(\bz) &:=\Rc_i(z_iz_{i+1}^{-1})\Psi(\bz),\\
\tilde \Psi_1(\bz)&:=K_1(z_1)\Psi(\bz),\\
\tilde \Psi_N(\bz)&:=K_N(z_N)\Psi(\bz).
\end{align*}
The vectors $\Psi_i(\bz), \tilde \Psi_1(\bz)$ and $\tilde \Psi_N(\bz)$
are common eigenvectors respectively of 
$$
{\bf s}_i\cS_j(\bz) , \textrm{~~~~~~~~}{\bf s}_0\cS_i(\bz)
\textrm{~~~~and~~~~} 
{\bf s}_N\cS_i(\bz),
$$
where ${\bf s}_i$  for $i\neq 0,N$ acts on functions of $\bz$ by
exchanging $z_i\leftrightarrow z_{i+1}$, while ${\bf s}_0$, ${\bf s}_N$
exchange respectively $z_1\leftrightarrow z_1^{-1}$, $z_N\leftrightarrow
z_N^{-1}$. 

In particular
\be\label{eigen-R}
\begin{split}
\left({\bf s}_i\cS_j(\bz)\right) \Psi_i(\bz) &= \Lambda_j(\bz)
\Psi_i(\bz) \textrm{~~~~~for~~~~~} j\neq
i,i+1\\
\left({\bf s}_i\cS_i(\bz)\right) \Psi_i(\bz) &= \Lambda_{i+1}(\bz)
\Psi_i(\bz)\\
\left({\bf s}_i\cS_{i+1}(\bz)\right) \Psi_i(\bz) &= \Lambda_{i}(\bz)
\Psi_i(\bz)
\end{split}
\ee
\be\label{eigen-K1}
\begin{split}
\left({\bf s}_0\cS_j(\bz)\right) \tilde \Psi_1(\bz) &= \Lambda_j(\bz)
\tilde \Psi_i(\bz) \textrm{~~~~~for~~~~~} j\neq
1\\
\left({\bf s}_0\cS_1(\bz)\right) \tilde \Psi_1(\bz) &= \Lambda_1^{-1}(\bz)
\tilde \Psi_i(\bz)
\end{split}
\ee
\be\label{eigen-KN}
\begin{split}
\left({\bf s}_N\cS_j(\bz)\right) \tilde \Psi_N(\bz) &= \Lambda_j(\bz)
\tilde \Psi_N(\bz) \textrm{~~~~~for~~~~~} j\neq
N\\
\left({\bf s}_N\cS_N(\bz)\right) \tilde \Psi_N(\bz) &= \Lambda_N^{-1}(\bz)
\tilde \Psi_N(\bz)
\end{split}
\ee
Eqs.(\ref{eigen-R}-\ref{eigen-KN}) are consequences of the following
commutation relations that are immediate to verify using the YBE, BYBE
and unitarity equations
\begin{equation}
\begin{split}
\Rc_i(z_iz_{i+1}^{-1})\cS_j(\bz) &= \left({\bf s}_i
\cS_j(\bz)\right)\Rc_i(z_iz_{i+1}^{-1})\textrm{~~~~~for~~~~~} j\neq
i,i+1\\
\Rc_i(z_iz_{i+1}^{-1})\cS_i(\bz)&=\left({\bf s}_i
\cS_{i+1}(\bz)\right)\Rc_i(z_iz_{i+1}^{-1})\\
\Rc_i(z_iz_{i+1}^{-1})\cS_{i+1}(\bz)&=\left({\bf s}_i
\cS_{i}(\bz)\right)\Rc_i(z_iz_{i+1}^{-1})
\end{split}
\ee
\be
\begin{split}
K_1(z_1)\cS_j(\bz)&= \left({\bf s}_0
\cS_j(\bz)\right)\textrm{~~~~~for~~~~~} j\neq
1\\
K_1(z_1)\cS_1(\bz)&= \left({\bf s}_0
\cS^{-1}_1(\bz)\right)
\end{split}
\ee
\be
\begin{split}
K_N(z_N)\cS_j(\bz)&= \left({\bf s}_N
\cS_j(\bz)\right)\textrm{~~~~~for~~~~~} j\neq
N\\
K_N(z_N)\cS_N(\bz)&= \left({\bf s}_0
\cS^{-1}_N(\bz)\right)
\end{split}
\ee

Since the $\Rc$ and $K$ matrices are stochastic and preserve the
subspace $H_{N,m}$, the stationary state in this sector
is lifted to $\Psi_{N,m}(\bz)$, the \emph{unique} solution of the
following equation on $H_{N,m}$
\be\label{scatt-1}
\cS_i(\bz) \Psi_{N,m}(\bz)=\Psi_{N,m}(\bz).
\ee
Though such equations are still to difficult to deal with, they tell
us that 
$\Psi_{N,m}(\bz)$ can be normalized to be polynomial in the spectral
parameters $\bz$. To really get a control on the solution of
eqs.(\ref{scatt-1}) we need the following 
\begin{theorem}\label{scatt-to-exch}
The solution $\Psi_{N,m}(\bz)$  of eqs.(\ref{scatt-1}) can be normalized in
such a way that it satisfies the following set of exchange-reflection equations 
\begin{align}\label{exch-open1}
\Rc_i(z_iz^{-1}_{i+1})\Psi_{N,m}(\bz) &= {\bf s}_i  
\Psi_{N,m}(\bz)\\ \label{exch-open2}
K_1(z_1)\Psi_{N,m}(\bz) &= {\bf s}_0 
\Psi_{N,m}(\bz)\\ \label{exch-open3}
K_N(z_N) \Psi_{N,m}(\bz) &= {\bf s}_N  
\Psi_{N,m}(\bz).
\end{align}
\end{theorem}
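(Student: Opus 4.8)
The plan is to read off (\ref{exch-open1})--(\ref{exch-open3}) from the fixed-point equations (\ref{scatt-1}) by feeding the distinguished eigenvector $\Psi_{N,m}(\bz)$ into the covariance relations (\ref{eigen-R})--(\ref{eigen-KN}), invoking uniqueness of the solution of (\ref{scatt-1}), and finally absorbing a residual scalar into the normalization. Since $\Psi_{N,m}(\bz)$ satisfies (\ref{scatt-1}) it has $\Lambda_j(\bz)\equiv 1$ for every $j$, so the permutation of the $\Lambda_j$'s occurring in (\ref{eigen-R})--(\ref{eigen-KN}) is invisible: those relations say that $\Rc_i(z_iz_{i+1}^{-1})\Psi_{N,m}(\bz)$, $K_1(z_1)\Psi_{N,m}(\bz)$ and $K_N(z_N)\Psi_{N,m}(\bz)$ are again common eigenvectors of the $\cS_j$ with all eigenvalues $1$, evaluated now at the reflected points ${\bf s}_i\bz$, ${\bf s}_0\bz$, ${\bf s}_N\bz$ (the $\cS_1^{-1}$, $\cS_N^{-1}$ in (\ref{eigen-K1})--(\ref{eigen-KN}) cause no trouble, the eigenvalue being $1$). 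On the other hand, substituting $\bz\to{\bf s}_i\bz$ in (\ref{scatt-1}) shows that ${\bf s}_i\Psi_{N,m}(\bz)=\Psi_{N,m}({\bf s}_i\bz)$ is also a solution of the scattering equations at the reflected point, and likewise for ${\bf s}_0$, ${\bf s}_N$. By the uniqueness stated in the text there are scalar functions $c_i,d_0,d_N$ of $\bz$ with
\begin{align*}
\Rc_i(z_iz_{i+1}^{-1})\Psi_{N,m}(\bz)&=c_i(\bz)\,{\bf s}_i\Psi_{N,m}(\bz),\\
K_1(z_1)\Psi_{N,m}(\bz)&=d_0(\bz)\,{\bf s}_0\Psi_{N,m}(\bz),\\
K_N(z_N)\Psi_{N,m}(\bz)&=d_N(\bz)\,{\bf s}_N\Psi_{N,m}(\bz).
\end{align*}

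To identify the scalars I would use that $\Rc_i(z)$, $K_1(z)$, $K_N(z)$ are stochastic, i.e. all their column sums equal $1$: contracting the three identities with the constant covector and writing $Z(\bz)$ for the sum of the components of $\Psi_{N,m}(\bz)$ (the ``partition function'', not identically zero since $Z({\bf 1})$ is a sum of stationary probabilities) gives $Z=c_i\,{\bf s}_iZ$, $Z=d_0\,{\bf s}_0Z$, $Z=d_N\,{\bf s}_NZ$, so that $c_i=Z/{\bf s}_iZ$, $d_0=Z/{\bf s}_0Z$, $d_N=Z/{\bf s}_NZ$. Thus $\{c_i,d_0,d_N\}$ is a $1$-coboundary for the action of the Weyl group of type $C_N$ on functions of $\bz$, and the rescaled vector $\widehat\Psi_{N,m}(\bz):=\Psi_{N,m}(\bz)/Z(\bz)$ satisfies $\Rc_i(z_iz_{i+1}^{-1})\widehat\Psi_{N,m}(\bz)=c_i(\bz)\,{\bf s}_i\Psi_{N,m}(\bz)/Z(\bz)={\bf s}_i\Psi_{N,m}(\bz)/{\bf s}_iZ(\bz)={\bf s}_i\widehat\Psi_{N,m}(\bz)$, and similarly for the two boundary equations. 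This is precisely (\ref{exch-open1})--(\ref{exch-open3}), so the theorem holds with this normalization.

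If one insists on keeping $\Psi_{N,m}(\bz)$ polynomial (the normalization wanted in the rest of the paper) the last point needs a refinement: one shows the $c_i,d_0,d_N$ are Laurent monomials, absorbed by a monomial rescaling. Multiplying the $\Rc_i$ identity by $t^{\fr12}z_i-t^{-\fr12}z_{i+1}$ clears the pole of $\Rc_i$ and yields a polynomial equality whose right-hand side is $(t^{\fr12}z_i-t^{-\fr12}z_{i+1})c_i(\bz)$ times the primitive vector ${\bf s}_i\Psi_{N,m}(\bz)$, so $(t^{\fr12}z_i-t^{-\fr12}z_{i+1})c_i(\bz)$ is a polynomial; combined with the unitarity identity $c_i(\bz)c_i({\bf s}_i\bz)=1$ (from $\Rc_i(z)\Rc_i(z^{-1})={\bf 1}$), with $c_i(\bz)|_{z_i=z_{i+1}}=1$ (from $\Rc_i(1)={\bf 1}$) and with the generic non-vanishing of the left-hand side at $z_i=tz_{i+1}$, a short factorization argument forces $c_i(\bz)=(z_i/z_{i+1})^{k_i}$; the same argument applied to $K_1(z|a,b)$ (poles at $z=a,b$, equal to ${\bf 1}$ at $z=\pm1$) and to $K_N(z|c,d)$ gives monomials $d_0,d_N$. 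These are removed by $\Psi_{N,m}(\bz)\mapsto(\prod_j z_j^{b_j})\Psi_{N,m}(\bz)$, the exponents solving a linear system that is consistent exactly because $\{c_i,d_0,d_N\}$ is a coboundary; after the rescaling $c_i=d_0=d_N=1$ while $\Psi_{N,m}$ is still polynomial.

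I expect the main obstacle to be this last refinement --- determining the scalar prefactors and checking that the normalizations demanded by the different $\Rc_i$-, $K_1$- and $K_N$-relations are simultaneously realizable. The coboundary observation already produces a valid normalization; the extra work is only to make it polynomial, and it relies solely on the explicit, elementary pole structure and unitarity of the baxterized $\Rc$ and $K$ matrices.
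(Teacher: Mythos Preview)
Your argument is correct and takes a genuinely different route from the paper's. Both proofs start the same way --- covariance of $\Psi_{N,m}$ under $\Rc_i$, $K_1$, $K_N$ together with uniqueness of the fixed point of (\ref{scatt-1}) yields the exchange--reflection equations up to scalar factors. The divergence is in how those scalars are identified and removed. The paper works with the minimal polynomial solution $\bar\Psi$, pins down the possible form of the $f_i$ from the pole structure and unitarity of $\Rc_i$, $K_1$, $K_N$, kills the signs by evaluating at $z_i=z_{i+1}$ and $z_1,z_N=\pm 1$, and then invokes the scattering equations (\ref{scatt-1}) once more to force $\epsilon_i=0$ and $d_1=d_N$. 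You instead use stochasticity: contracting with the all-ones covector shows the scalars are $Z/{\bf s}_iZ$, hence automatically a $1$-coboundary, so dividing by $Z$ already produces a solution of (\ref{exch-open1})--(\ref{exch-open3}). This is shorter and conceptually cleaner for the theorem as stated, and it never appeals back to (\ref{scatt-1}).

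Two remarks on your polynomial refinement. First, for the bulk generators your factorization argument in fact gives $c_i=1$ outright (not a general monomial $(z_i/z_{i+1})^{k_i}$): once $(t^{\fr12}z_i-t^{-\fr12}z_{i+1})c_i$ is a polynomial whose product with its ${\bf s}_i$-image is $(t^{\fr12}z_i-t^{-\fr12}z_{i+1})(t^{\fr12}z_{i+1}-t^{-\fr12}z_i)$, the only options are $c_i=\pm 1$ or $\pm$ the ratio of the two linear factors, and your two evaluations kill the unwanted cases. Second, your coboundary observation is what replaces the paper's final appeal to (\ref{scatt-1}): knowing that $c_i\equiv 1$ and that $d_0,d_N$ are monomials, the relation $d_0=Z/{\bf s}_0Z$, $d_N=Z/{\bf s}_NZ$ with $Z$ symmetric (from $c_i=1$) forces the two boundary exponents to match, so the single monomial rescaling $\prod_j z_j^{-d/2}$ works. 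What each approach buys: the paper's gives the Laurent polynomial normalization directly, which is what the rest of the analysis needs; yours isolates the purely structural content (stochasticity $\Rightarrow$ coboundary) from the analytic input, and would apply verbatim to any stochastic $\Rc,K$ satisfying YBE/BYBE and unitarity.
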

\begin{proof}
As we argued above, $\Psi_{N,m}(\bz)$ can be
normalized to be a polynomial in the spectral parameters $\bz$ with no
overall factor. Call such minimal polynomial solution $\bar
\Psi_{N,m}(\bz)$. Using eqs.(\ref{eigen-R}-\ref{eigen-KN}) and the
uniqueness of the solution of eqs.(\ref{scatt-1}) we conclude that $\bar
\Psi_{N,m}(\bz)$ satisfies eqs.(\ref{exch-open1}-\ref{exch-open3}), up
to a multiplicative factor  
\begin{equation}\label{exch-interm}
\begin{split}
\Rc_i(z_iz^{-1}_{i+1})\bar \Psi_{N,m}(\bz) &= f_i(\bz)~{\bf s}_i  
\bar\Psi_{N,m}(\bz)\\
K_1(z_1)\bar\Psi_{N,m}(\bz) &= f_0(\bz)~{\bf s}_0 
\bar\Psi_{N,m}(\bz)\\
K_N(z_N) \bar\Psi_{N,m}(\bz) &= f_N(\bz)~{\bf s}_N  
\bar\Psi_{N,m}(\bz).
\end{split}
\ee
Using the analytic structure of the right hand side of
eqs.(\ref{exch-interm}) and the unitarity conditions, we conclude that the
functions  $f_i(\bz)$ can take the form 
\begin{align*}
f_i(\bz) &= (-1)^{\kappa_i}
\left(\frac{t^{\fr12}z_{i+1}-t^{-\fr12}z_i}{t^{\fr12}z_i-t^{-\fr12}z_{i+1}}
  \right)^{\epsilon_i}  ~~~\textrm{for}~~~i\neq 0,N\\
f_0(\bz) &= (-1)^{\kappa_0} z_1^{d_1}
\left(\frac{(1-az)(1-bz)}{(z-a)(z-b)}
  \right)^{\epsilon_0}\\
f_N(\bz) &= (-1)^{\kappa_n} z_N^{d_N}
\left(\frac{(z-c)(z-d)}{(1-cz)(1-dz)}
  \right)^{\epsilon_i}
\end{align*}
for some $\kappa_i,\epsilon_i\in\{0,1\}$ and  
where $d_i$ is the degree of $\bar\Psi_{N,m}(\bz)$ in $z_i$.
Setting $z_i=z_{i+1}$ in the first of the eqs.(\ref{exch-interm}) leads
to  
$$
\bar \Psi_{N,m}(\bz)_{z_i=z_{i+1}}= (-1)^{\kappa_i}\bar \Psi_{N,m}(\bz)_{z_i=z_{i+1}}
$$
which forces the $\kappa_i=0$, otherwise one would find that
$\bar\Psi_{N,m}(\bz)$ is divisible by $z_i-z_{i+1}$. In the same way
setting $z_1 = \pm 1$ in the second two or $z_N=\pm 1$ in third of
eqs.(\ref{exch-interm}) 
one get $\kappa_0=\kappa_N=0$ and moreover that $d_1$ and $d_N$ must be even integers. 
 
The last step of the proof consist in combining eqs.(\ref{scatt-1})
and eqs.(\ref{exch-interm}). This leads to $\epsilon_i=0$ and
$d_1=d_N=d$, therefore it is immediate to conclude that the Laurent
polynomial 
$$
\Psi_{N,m}(\bz) := \prod_{i=1}^N z_i^{-d/2} ~\bar \Psi_{N,m}(\bz)_{z_i=z_{i+1}}
$$
satisfies eqs.(\ref{exch-open1}-\ref{exch-open3}).
\end{proof}
A simple consequence of Theorem \ref{scatt-to-exch} is the following
uniqueness result
\begin{theorem}\label{unicity}
The system of equations (\ref{exch-open1}-\ref{exch-open3})  admits a 
Laurent polynomial solution, unique up to a multiplication of a
function invariant under the action of ${\bf s}_i$.
\end{theorem}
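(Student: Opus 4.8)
The plan is to deduce the uniqueness clause from the uniqueness of the solution of the scattering equations~(\ref{scatt-1}) recorded just before Theorem~\ref{scatt-to-exch}. Existence needs nothing new: Theorem~\ref{scatt-to-exch} already exhibits a Laurent polynomial solution, namely the normalized stationary vector $\Psi_{N,m}(\bz)$. So the entire content lies in the uniqueness statement, and the crux is the implication that every solution of~(\ref{exch-open1}-\ref{exch-open3}) is also a solution of~(\ref{scatt-1}).

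First I would show that any $\Phi(\bz)$ satisfying~(\ref{exch-open1}-\ref{exch-open3}) obeys $\cS_i(\bz)\Phi(\bz)=\Phi(\bz)$ for every $i$. This is obtained by feeding the three families of exchange-reflection relations into the defining product of $\cS_i(\bz)$, processed from right to left: each factor $\Rc_j$ occurring in $\cS_i(\bz)$ carries, once the transpositions generated by the factors standing to its right have acted on the variables, exactly the argument $z_jz_{j+1}^{-1}$ demanded by~(\ref{exch-open1}), so it can be replaced by the bare exchange ${\bf s}_j$; likewise $K_1(z_i^{-1})$ and $K_N(z_i)$ are traded for ${\bf s}_0$ and ${\bf s}_N$ by~(\ref{exch-open2}-\ref{exch-open3}). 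The resulting word in ${\bf s}_0,\dots,{\bf s}_N$ is precisely the one transporting $z_i$ to the left boundary, inverting it, returning it to site $i$, transporting it to the right boundary, inverting it and returning it --- a closed path --- and therefore acts as the identity on functions of $\bz$; hence $\cS_i(\bz)\Phi(\bz)=\Phi(\bz)$. I expect the bookkeeping in this step --- keeping track of which variable occupies which slot after each elementary move, so that the spectral arguments really line up with~(\ref{exch-open1}-\ref{exch-open3}) --- to be the main obstacle; it is the computation at the heart of the scattering-matrix formalism and has to be carried out once with care.

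Given this, any solution $\Phi(\bz)$ of~(\ref{exch-open1}-\ref{exch-open3}) solves~(\ref{scatt-1}) on $H_{N,m}$, and by the uniqueness recalled above it must be a multiple $\Phi(\bz)=h(\bz)\,\Psi_{N,m}(\bz)$ for some function $h$ of $\bz$. To identify the freedom in $h$, I would feed this back into~(\ref{exch-open1}): since the matrix $\Rc_i(z_iz_{i+1}^{-1})$ acts on $H_N$ only and not on the variables, it commutes past the scalar $h(\bz)$, so
\[
h(\bz)\,{\bf s}_i\Psi_{N,m}(\bz)=\Rc_i(z_iz_{i+1}^{-1})\big(h(\bz)\,\Psi_{N,m}(\bz)\big)={\bf s}_i\big(h(\bz)\,\Psi_{N,m}(\bz)\big)=({\bf s}_i h)(\bz)\,{\bf s}_i\Psi_{N,m}(\bz),
\]
and cancelling the nonzero vector ${\bf s}_i\Psi_{N,m}(\bz)$ gives $h={\bf s}_i h$. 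Repeating with~(\ref{exch-open2}) and~(\ref{exch-open3}) yields $h={\bf s}_0 h$ and $h={\bf s}_N h$, so $h$ is invariant under all the ${\bf s}_i$ --- precisely the asserted ambiguity in the Laurent polynomial solution.
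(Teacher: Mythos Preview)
Your argument is correct and is precisely the natural way to see why the paper calls this ``a simple consequence of Theorem~\ref{scatt-to-exch}'': existence is furnished by $\Psi_{N,m}(\bz)$, and for uniqueness you show that the exchange-reflection relations force $\cS_i(\bz)\Phi(\bz)=\Phi(\bz)$ (by peeling off the factors of $\cS_i$ one at a time and checking that after each reflection the next factor has exactly the spectral argument demanded by (\ref{exch-open1}-\ref{exch-open3})), then invoke the uniqueness of the solution of (\ref{scatt-1}) on $H_{N,m}$ and read off the ${\bf s}_i$-invariance of the scalar factor. The paper does not spell out a proof beyond that one-line remark, so there is nothing to compare against; you have simply filled in the intended details.
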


\section{Analysis of the solution of the exchange-reflection equations}\label{analysis}

In order to analyze the solution of the exchange-reflection equations 
(\ref{exch-open1}-\ref{exch-open3}) we need to introduce some algebraic
background. 

\subsection{Affine Hecke algebra of type $\tilde \cC_N$ and Noumi representation}\label{ahaCN}

The affine Weyl group $\cW_N$ of type $\tilde \cC_N$ is the group
generated by elements $\bs_0,\bs_1,\dots,\bs_N$ subject to the
relations 
\begin{align}
\bs_i^2&=1\\
\bs_i\bs_j&=\bs_j\bs_i ~~\textrm{if}~~|i-j|>1\\ 
\bs_i \bs_{i+1} \bs_i &= \bs_{i+1} \bs_i \bs_{i+1}~~\textrm{if}~~i\neq 0,N-1\\
\bs_0 \bs_1 \bs_0 \bs_1 &= \bs_1 \bs_0 \bs_1 \bs_0\\
\bs_N \bs_{N-1} \bs_N \bs_{N-1} &= \bs_{N-1} \bs_N \bs_{N-1}\bs_N
\end{align}
The finite Weyl group $\cW_N^{\pp 0}$ of type $\cC_N$ is the subgroup of
$\cW_N$ generated by elements $s_1,\dots s_{N}$.
The affine Weyl group $\cW_N$ has a faithful action, depending on a parameter $q$, on
$\bC[z_1^\pm,\dots,z_N^\pm]$, the space of 
Laurent Polynomials in $N$ variables
\begin{align}
&\bs_i f(z_1,\dots,z_i,z_{i+1},\dots,z_N)= f(z_1,\dots,z_{i+1},z_{i},\dots,z_N)\\
&\bs_0 f(z_1,\dots,z_i,z_{i+1},\dots,z_N)= f(qz_1^{-1},\dots,z_i,z_{i+1},\dots,z_N)\\ 
&\bs_N f(z_1,\dots,z_i,z_{i+1},\dots,z_N)=
f(z_1,\dots,z_i,z_{i+1},\dots,z_N^{-1})
\end{align}
It has also actions on $\bZ^N$, parametrized by the level
$\ell\in \bZ$, 
\begin{align}
&\bs_i \{\alpha_1,\dots,\alpha_i,\alpha_{i+1},\dots,\alpha_N\}=
  \{\alpha_1,\dots,\alpha_{i+1},\alpha_{i},\dots,\alpha_N\}\\ 
&\bs_0 \{\alpha_1,\dots,\alpha_i,\alpha_{i+1},\dots,\alpha_N\}=
          \{\ell-\alpha_1,\dots,\alpha_i,\alpha_{i+1},\dots,\alpha_N\}\\  
&\bs_N \{\alpha_1,\dots,\alpha_i,\alpha_{i+1},\dots,\alpha_N\}=
\{\alpha_1,\dots,\alpha_i,\alpha_{i+1},\dots,-\alpha_N\}.
\end{align}
Notice that the action of $\cW_N^{\pp 0}$ on
$\bC[z_1^\pm,\dots,z_N^\pm]$ can be read off from its action on $\bZ^N$
by looking at monomials, namely for $w\in \cW_N^{\pp 0}$
$$
w \bz^\alpha = \bz^{w \alpha}.
$$
We define two partial orders on $\bZ^N$. The first is the usual
\emph{dominance order}: $\alpha \leq \beta$  if for  $1\leq j\leq N$
we have $\sum_{i=1}^j (\alpha_i-\beta_i)\leq 0$. 
The second order, $\alpha \preceq \beta$, is defined as follows.  
Call $\alpha^+$ the unique element in $\cW_N^{\pp 0}\alpha$ such that
$\alpha^+$ is a partition i.e. 
such that $\alpha_i\geq \alpha_{i+1}\geq 0$.
Then we say that  $\alpha \preceq \beta$ if $\alpha^+ < \beta^+$ or
in case $\alpha^+ = \beta^+$ if $\alpha \leq \beta$. 

\vskip .3cm
The Affine Hecke algebra $\cH_N$ of type $\tilde \cC_N$ is a deformation of the
group algebra of $\cW_N$, which depends on three parameters $t_0,t_N,t$ and 
is generated by elements $T_0,T_1,\dots,T_N$ subject to the
commutations relations
\begin{align}\label{hecke-def}
T_i-T_i^{-1}&= t_i^{\fr12}-t_i^{-\fr12}\\
T_iT_j&=T_jT_i ~~\textrm{if}~~|i-j|>1\\ 
T_i T_{i+1} T_i &= T_{i+1} T_i T_{i+1}~~\textrm{if}~~i\neq 0,N-1\\
T_0 T_1 T_0 T_1 &= T_1 T_0 T_1 T_0\\
T_N T_{N-1} T_N T_{N-1} &= T_{N-1} T_N T_{N-1}T_N
\end{align}
with $t_1=t_2=\dots=t_{N-1}=t$.

The finite Hecke algebra $\cH_N^{\pp 0}$ of type $\cC_N$ is a sub-algebra of $\cH_N$
generated by elements $T_1,\dots,T_{N-1},T_N$. 
It is well known \cite{humphreys1992reflection} that a basis of
$\cH_N$ is parametrized by elements of 
$\cW_N$ 
$$
T_w = T_{i_1}T_{i_2}\dots T_{i_\ell}
$$
where $w= s_{i_1}s_{i_2}\dots s_{i_\ell}$ is a reduced expression of
$w\in \cW_N$.

An important commutative sub-algebra $\cY_N$ is generated by Lusztig elements $Y_1^{\pm
1}, \dots, Y_N^{\pm 1} $ \cite{lusztig1989affine}
\be\label{lusztig}
Y_i = (T_i\dots T_{N-1})(T_N \dots T_0)(T_{1}^{-1}\dots T_{i-1}^{-1}).
\ee

In \cite{noumi1995macdonald}, Noumi introduced a representation of $\cH_N$ depending on
$6$ parameters $a,b,c,d, t,q$, acting on $\bC[z_1^{\pm
    1},\dots,z_N^{\pm 1}] $, the space of Laurent polynomials in $N$
variables $\bz=\{z_1,\dots,z_N\}$\footnote{Here we shall just consider
$a,b,c,d,t,q \in \bC$.} 
\begin{align}
\widehat T_i&=t^{\fr12} -(t^{\fr12}z_i-t^{-\fr12}z_{i+1})~ \partial_{i} \\
\widehat T_0 &=
t_0^{\fr12}-t_0^{-\fr12} \frac{(z_1-a)(z_1-b)}{z_1}~\partial_0\\
\widehat T_N &= t_N^{\fr12}-t_N^{-\fr12} \frac{(c z_N-1)(d z_N
  -1)}{z_N}\partial_N,
\end{align}
where $t_0=-q^{-1}ab$ $t_N=-cd$ and the finite difference operator
$\partial_{i}$, $\partial_0$ and $\partial_N$  are defined by
$$
\partial_{i} = \frac{1-{\bf s}_{i}}{z_i-z_{i+1}},~~~\partial_0 = 
\frac{1-{\bf s}_0}{z_1-qz_1^{-1}},~~~\partial_N = 
\frac{1-{\bf s}_N}{z_N-z_N^{-1}}. 
$$

\subsection{Analysis of eqs.(\ref{exch-open1}-\ref{exch-open3})}

Let us write the wave function $\Psi_{N,m}(\bz)$ in the basis
$\bw=w_1\dots w_N$, $w_i=\circ,\ast,\bullet$
$$
\Psi_{N,m}(\bz)=\sum_{\bw \in \cQ(N,m)} \psi_\bw(\bz)~ \bw,
$$
where the sum runs on $\cQ(N,m)$, the set of 2ASEP configurations on a
strip of
length $N$ in the presence of $m$ second class particles.
We can identify a word in $\circ,\ast,\bullet$ with an element of
$\bZ^N$ through the rules $\circ \equiv -1, \ast \equiv 0, \bullet
\equiv +1$, hence the set $\cQ(N,m)$ 
inherits an action of $\cW_N^{\pp 0}$ and of $\cW_N$ at level $0$ and
the partial orders $\leq, \preceq$.

Once written in components, the exchange-reflection 
eqs.(\ref{exch-open1}-\ref{exch-open3}) are expressed in a nice
compact form using 
the generators of the affine Hecke algebra $\cH_N$ in the Noumi
representation at $q=1$ 
\begin{align}\label{exch-comp1}
&\psi_w = t_i^{-\fr12}\widehat T_i\pp\psi_w &\textrm{if}&& &w=s_iw \\\label{exch-comp2}
&\psi_{s_i w} = t_i^{\fr12}\widehat T_i\pp\psi_w &\textrm{if}&& & \left\{
\begin{array}{ccc}
i>0 & \& & w < s_iw\\
i=0 & \& & w > s_0w  
\end{array}\right.
\end{align}
Recall that for $1\leq i \leq N-1$, $t_i=t$, while $t_0=-ab$ and $t_N=-cd$.

In words, eq.(\ref{exch-comp1}) tells us that if $w_i=w_{i+1}$ then
$\psi_w$ is symmetric under exchange $z_i\leftrightarrow z_{i+1}$,
while if $w_0=\ast$ or $w_N=\ast$, then $\psi_w$ is respectively
symmetric under inversions $z_1\rightarrow z_1^{-1}$ or
$z_N\rightarrow z_N^{-1}$.
Eq.(\ref{exch-comp2}) tells us how to exchange two different
neighboring particles 
\begin{equation*}
\begin{array}{ccc}
\cdots  \bullet \circ \cdots &  & \cdots \circ \bullet\cdots\\[-1pt]
\cdots  \bullet \ast \cdots & \xrightarrow{t^{\fr12}\widehat T_i} &\cdots \ast \bullet \cdots\\[7pt]
\cdots  \ast \circ \cdots & &\cdots \circ \ast  \cdots
\end{array}
\end{equation*}
and how to inject/remove particles from the boundaries.
\begin{align*}
\circ \cdots \xrightarrow{t_0^{\fr12}\widehat T_0} \bullet \cdots &&
\cdots \bullet  \xrightarrow{t_N^{\fr12}\widehat T_N} \cdots\circ
\end{align*}

\subsection{Reference component}

Let us consider a system of length $N=k+m$, in the sector with $m$
second class particles, and analyze the component associated to the state 
$$
\circ^k \ast^m
=\{\underbrace{\circ \dots
    \circ}_{k}\underbrace{\ast \dots \ast}_{m}\}
$$ 
It follows from the eqs.(\ref{exch-comp1},\ref{exch-comp2}), that
$\psi_{\circ^k \ast^m}(\bz)$ is preserved by the Lusztig operators $Y_i$.
Therefore (see Appendix \ref{algebraic}) we can identify 
$\psi_{\circ^k \ast^m}(\bz)$ with the non-symmetric Koornwinder
\be
\psi_{\circ^k \ast^m}(\bz) = E_{\mu(k,m)}(\bz)
\ee
associated to the string
$$
\mu(k,m)= \{\underbrace{-1,\dots,-1}_k,\underbrace{0,\dots,0}_m\}.
$$
This fixes completely the normalization of $\Psi_{N,m}(\bz)$.

The non-symmetric Koornwinder polynomials depend on the further
parameter $q$ which appears in the general Noumi representation. Hence
a priory we should specify that $E_{\mu(k,m)}(\bz)$ is taken for
$q=1$, but actually we will see in a moment that $E_{\mu(k,m)}(\bz)$
doesn't depend on $q$. 

Moreover, while the monomials expansion of
any non-symmetric Koornwinder polynomial reads as in
eq.(\ref{triangular}), the expansion of $E_{\mu(k,m)}(\bz)$ is much
simpler, it contains only negative powers of $z_i$ for $1\leq i \leq
k$. 

Either from the exchange eq.(\ref{exch-comp1}) applied to
$\psi_{\circ^k \ast^m}(\bz)$ or from 
the general theory of non-symmetric Koornwinder
polynomials \cite{sahi1999nonsymmetric}
we know that $E_{\mu(k,m)}(\bz)$ is separately symmetric in the first
$k$ and last $m$ spectral parameters, so that we can express it as a
sum of elementary symmetric polynomials in $z$ variables
\be\label{koorn-exp1}
E_{\mu(k,m)}(\bz) = \sum_{i=0}^k h_i^{(k,m)}
~ e_{k-i}(z_1^{-1}, \dots, z_k^{-1})
\ee
with $h_0^{(k,m)}=1$. Using the symmetry properties of
$E_{\mu(k,m)}(\bz)$ we can rewrite its defining equation (\ref{def-nonsymm-mcd})
as
\be\label{eig-eq-Ekm}
\widehat T_{k}\widehat T_{k+1}\dots \widehat T_{N-1}\widehat T_N
\widehat T_{N-1} \dots \widehat T_1\widehat T_0E_{\mu(k,m)}(\bz) = E_{\mu(k,m)}(\bz)
\ee
Pictorially this equation has a clear meaning: the action of $\widehat T_0$
injects a particle on the left, then $\widehat T_{N-1}\dots \widehat T_1$ moves this
particle to the far right, $\widehat T_N$ removes the particle from the right
and finally $\widehat T_{k}\widehat T_{k+1}\dots \widehat T_{N-1}$ brings back the empty
state in position $k$.

Equation (\ref{eig-eq-Ekm}) allows to determine $h_j^{(k,m)}$ in a
recursive way. It turns out that they do not depend on $k$ and have a
rather simple dependence on $m$. This is the content of the following
\begin{proposition}\label{ref-comp-prop1}
The non symmetric Koornwinder polynomial $E_{\mu(k,m)}(\bz)$ does not
depend on $q$ and reads
\be\label{int-form-non-symm}
E_{\mu(k,m)}(\bz) = \oint_0 \frac{dw}{2\pi i w}
H(w;a,b,t^mc, t^m d) \prod_{i=1}^k (w^{-1}+z_i^{-1})
\ee
where the integration is around zero and $H(w;a,b,c, d)$ is a formal
power series in $w$
\be
H(w;a,b, c, d) =\sum_{j=0}^\infty h_n(a,b,c,d)w^n
\ee
with
\begin{multline}\label{sol-h}
h_n(a,b,c,d)= \\\frac{1}{(abcd;t)_n}\sum_{\substack{0\leq
    i,j,k\leq n\\i+j+k+h\leq n}}{n \brack i,j,k}_t
t^{\binom{i}{2}+\binom{j}{2}}a^{i}b^{j}(-c)^{n-k}(-d)^{i+j+k}
\end{multline}
where we have used the usual q-multinomial with base $t$ 
$$
{n \brack i,j,k}_t=
\frac{(t;t)_n}{(t;t)_i(t;t)_j(t;t)_k(t;t)_{n-i-j-k}},~~~(a;t)_n=\prod_{i=0}^{n-1}(1-at^i). 
$$
\end{proposition}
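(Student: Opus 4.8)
The plan is to establish the explicit form of $h_n^{(k,m)}$ by turning the eigenvalue equation (\ref{eig-eq-Ekm}) into a recursion, and then verifying that the claimed closed form solves it. First I would apply the chain of Noumi operators $\widehat T_k\widehat T_{k+1}\cdots\widehat T_{N-1}\widehat T_N\widehat T_{N-1}\cdots\widehat T_1\widehat T_0$ to the expansion (\ref{koorn-exp1}), tracking how each factor transforms the elementary symmetric polynomials $e_{k-i}(z_1^{-1},\dots,z_k^{-1})$. Since $E_{\mu(k,m)}(\bz)$ contains only nonpositive powers of $z_1,\dots,z_k$ and is symmetric in those variables and also (by the symmetry in the last $m$ variables, together with the explicit action of $\widehat T_N,\dots,\widehat T_{k+1}$ which only shuffle an empty slot through the $\ast$'s) one can compute the net effect site by site. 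The key observation is that pushing a freshly injected particle from site $1$ to site $N$ via $\widehat T_{N-1}\cdots\widehat T_1$, bouncing off $\widehat T_N$, and bringing the empty state back to position $k$ produces an \emph{affine} recursion on the coefficients $h_i$ with $t$-binomial weights; the parameters $c,d$ enter shifted to $t^m c, t^m d$ precisely because the particle must traverse the block of $m$ second class particles, each crossing contributing a factor of $t$ to the effective boundary parameter (this is the mechanism behind the $t^m$ shift in (\ref{int-form-non-symm})).

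Next I would show $k$-independence. Because (\ref{koorn-exp1}) is a sum over $e_{k-i}$ and the operators $\widehat T_k,\dots,\widehat T_1$ act on symmetric functions of $z_1^{-1},\dots,z_k^{-1}$ in a way that is "stable" in $k$ — adding one more empty site at position $k$ merely adds one more variable without altering the recursion relating consecutive $h_i$ — the recursion for $h_i^{(k,m)}$ closes among the $h_j$ with $j\le i$ and does not reference $k$. This is most transparent in the generating-function language: setting $H^{(k,m)}(w)=\sum_i h_i^{(k,m)} w^i$ and using $\sum_{i=0}^k h_i\, e_{k-i}(z_1^{-1},\dots,z_k^{-1}) = \oint_0 \frac{dw}{2\pi i\,w} H^{(k,m)}(w)\prod_{i=1}^k(w^{-1}+z_i^{-1})$, the eigenvalue equation (\ref{eig-eq-Ekm}) becomes a \emph{single} functional/$q$-difference equation on $H$ that no longer involves $k$, which simultaneously proves $k$-independence and independence of $q$ (the $q$-dependence of the Noumi operators only enters through $\widehat T_0$, and the term it would contribute is killed on the space of Laurent polynomials with only nonpositive $z_1$-powers). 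I would then record this functional equation for $H(w;a,b,t^mc,t^md)$ explicitly.

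Finally I would verify that $h_n(a,b,c,d)$ as given in (\ref{sol-h}) satisfies the functional equation, equivalently the explicit recursion in $n$. The cleanest route is to recognise the triple sum $\sum_{i,j,k} {n\brack i,j,k}_t t^{\binom i2+\binom j2} a^i b^j(-c)^{n-k}(-d)^{i+j+k}$ as arising from a $q$-binomial-theorem expansion — effectively a terminating ${}_4\phi_3$-type sum — and to match it against the recursion by the $t$-Vandermonde / $q$-Chu–Vandermonde identity, the prefactor $1/(abcd;t)_n$ being exactly what is needed for the induction step to telescope. Concretely, one substitutes the ansatz into the functional equation, uses ${n\brack i,j,k}_t = {n-1\brack i,j,k}_t + t^{\,n-1-i-j-k}{n-1\brack i-1,j,k}_t + \cdots$ type Pascal recurrences to split off the $n\to n-1$ piece, and checks the remaining terms cancel.

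The main obstacle I expect is the bookkeeping in the first step: computing exactly the net action of the long word $\widehat T_k\cdots\widehat T_N\cdots\widehat T_1\widehat T_0$ on (\ref{koorn-exp1}), in particular getting the correct $t$-powers and the $t^mc,t^md$ shift, requires careful tracking of the finite-difference operators $\partial_i$ as the "defect" migrates through the $\ast$-block — it is easy to lose a power of $t$ per crossing. Once the resulting recursion (or functional equation for $H$) is pinned down correctly, the verification of the closed form is a standard, if lengthy, $q$-series manipulation, and the $q$- and $k$-independence claims fall out for free.
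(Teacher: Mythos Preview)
Your approach is essentially the paper's: rewrite the expansion as a contour integral, push the long word $\widehat T_k\cdots\widehat T_N\cdots\widehat T_1\widehat T_0$ through using explicit identities for the action of each $\widehat T_j$ on factors of the form $(w^{-1}+z_j^{-1})$ and $(w+z_j)$, and read off a recursion for the $h_n$. Two points deserve correction or sharpening.

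First, your stated mechanism for $q$-independence is not right. The $q$-dependence in $\widehat T_0$ sits both in $t_0=-q^{-1}ab$ and in the reflection $z_1\mapsto qz_1^{-1}$ inside $\partial_0$, and it is \emph{not} the case that the $q$-dependent contribution vanishes on Laurent polynomials with only nonpositive powers of $z_1$ (try $z_1^{-1}$). What actually happens, as the paper's computation shows, is that after you apply all the operators and collect prefactors, the resulting integral identity simply no longer contains $q$: the $q$'s coming from $t_0^{\pm1/2}$ and from the action on $(w^{-1}+z_1^{-1})$ cancel algebraically. So $q$-independence is an output of the calculation, not a structural fact about the domain.

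Second, the paper extracts a cleaner endgame than your proposed multinomial Pascal manipulation. Once the integral identity is in hand, one observes that since it holds for \emph{all} $z_1,\dots,z_{k-1}$, one may replace $\prod_{j=1}^{k-1}(w^{-1}+z_j^{-1})$ by an arbitrary polynomial $p(w^{-1})$ of degree $k-1$; choosing $p(w)=w^{n-2}$ yields directly a three-term recursion
\[
(t^{n-1}ab-c^{-1}d^{-1})h_n+(t^{n-1}(a+b)-(c^{-1}+d^{-1}))h_{n-1}+(t^{n-1}-1)h_{n-2}=0,
\]
which is much easier to verify against the closed form (\ref{sol-h}) than the full functional equation, and which moreover identifies $h_n$ with an Al-Salam--Chihara polynomial. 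This ``test against arbitrary $p$'' step is also what makes the $k$-independence and the $c,d\to t^m c,t^m d$ shift transparent in one stroke.
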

\begin{proof}
Let us rewrite eq.(\ref{koorn-exp1}) as a contour
integral around the origin
\be\label{contour1}
E_{\mu(k,m)}(\bz) = \oint_0 \frac{dw}{2\pi i w}H_{k,m}(w)
\prod_{i=1}^k (w^{-1}+z_i^{-1})
\ee
with $H_{k,m}(w)=\sum_{j=0}^kh_j^{(k,m)}w^{j}$.
Using the following relations 
\begin{align*}
\widehat T_0 (w^{-1}+z_1^{-1}) &=
q^{-1}t_0^{-\fr12}\left((w+z_1)-\frac{(w+a)(w+b)}{w}\right) \\
\widehat T_j(w+z_j)(w^{-1}+z_{j+1}^{-1})&=t^{-\fr12}(w^{-1}+z_{j}^{-1})(w+z_{j+1})\\
\widehat T_j(w^{-1}+z_{j+1}^{-1})&=t^{-\fr12}(tw^{-1}+z_{j}^{-1})\\
\widehat T_j(w+z_{j})&=  t^{-\fr12}(tw+z_{j+1})\\
\widehat T_N (w+z_N) &=
t_N^{-\fr12}\left((w^{-1}+z_N^{-1})-\frac{(1+wc)(1+wd)}{w}\right) 
\end{align*}
in eq.(\ref{eig-eq-Ekm}) we get
$$
t_N^{\fr12}t^m\oint_0 \frac{dw}{2\pi
  i}H_{k,m}(w)\frac{(a+w)(b+w)}{w}\prod_{j=1}^{k-1}(tw^{-1}+z_j^{-1})= 
$$
$$
t_N^{-\fr12}\oint_0 \frac{dw}{2\pi
  i}H_{k,m}(w)\frac{(1+t^mcw)(1+t^md w)}{w}\prod_{j=1}^{k-1}(w^{-1}+z_j^{-1}) 
$$
The fact that this equation must be true for any choice of
$z_1,\dots,z_{k-1}$  
means that we can substitute $\prod_{j}^{k-1}(w^{-1}+z_j^{-1})$ with any polynomial $p(w^{-1})$
of degree $k-1$, getting
\be\label{generat-eq-h}
\begin{split}
\oint_0 \frac{dw}{2\pi
  i}H_{k,m}(w)\left(p(tw^{-1})\frac{(a+w)(b+w)}{w}\right) =\\
\oint_0 \frac{dw}{2\pi
  i}H_{k,m}(w)\left(p(w^{-1})
\frac{(w+t^{-m}c^{-1})(w+t^{-m}d^{-1})}{w} \right) 
\end{split}
\ee
From the previous equation we can already make a couple of
conclusions.\\ 
1) Since in eq.(\ref{generat-eq-h}) the parameter $q$ (which was
present through $t_0=-q^{-1}ab$) has disappeared
we have that $E_{\mu(k,m)}(\bz) $ does not depend on $q$.\\
2) The dependence on $k$ is only in the degree of the arbitrary
polynomial $p(w)$, which means that the coefficients  $h_j^{(k,m)}$
actually do not depend on $k$, therefore we can suppress the
label $k$ and we think of $H_{k,m}(w)=H_{m}(w) $ as a
formal infinite series. \\
3) The parameter $m$ appears just by multiplying
both $c$ and $d$ by $t^m$ and  we can write the solution
of eq.(\ref{generat-eq-h}) as
$$
H_{m}(w;a,b,c,d)=H(w;a,b,t^mc, t^m d)
$$
and 
$$
h_j^{(m)}=h_j(a,b,t^m c,t^m d) 
$$
It remains only to use eq.(\ref{generat-eq-h}) to determine the
coefficients $h_j(a,b,c,d)$. 
By setting $p(w)=w^{n-2}$ into eq.(\ref{generat-eq-h}) we get
following three-terms recursion relation 
\begin{multline}\label{rec-h_i}
(t^{n-1}ab-c^{-1}d^{-1})h_n+\\ 
(t^{n-1}(a+b)-(c^{-1}+d^{-1}))h_{n-1}
+ (t^{n-1}-1)h_{n-2}=0.
\end{multline}
with initial conditions $h_{0}=1$. 
It is then not difficult to check that the expression in
eq.(\ref{sol-h}) satisfies this recursion relation.
\end{proof}
An immediate consequence of Proposition \ref{ref-comp-prop1} is that 
\be\label{recursion-psi-ref}
\psi_{\circ^k \ast^m}(\bz;a,b,c,d)=\psi_{\circ^k\ast^{m-1}}(\bz_{\widehat N};a,b ,tc
,td).
\ee
where $\bz_{\widehat j}$ means that the variable $z_j$ is absent. 

The recursion relation for the coefficients $h_i$ eq.(\ref{rec-h_i})
has some invariance properties. It is
invariant under $a\leftrightarrow c^{-1},b\leftrightarrow
d^{-1},t\leftrightarrow t^{-1}$, from which (writing explicitly  the
dependence of $h_n$ on $t$ as $h_n(a,b,c,d|t)$) we  get the duality relation
\be\label{inv-h_i}
h_n(a,b,c,d|t)=h_n(c^{-1},d^{-1},a^{-1},b^{-1}|t^{-1}).
\ee
It is also invariant under 
\begin{align*}
a\rightarrow~\lambda^{-1}a,&&
b\rightarrow~\lambda^{-1}b,&&  c\rightarrow~\lambda~c,&& d\rightarrow~\lambda~d,&&
h_n\rightarrow~\lambda^n~h_n,
\end{align*}
which means that
$h_n(a,b,c,d)$ is homogeneous in
$a^{-1},b^{-1},c,d$ of degree $n$ and we have
\be\label{H-hom}
H(\lambda
^{-1}w;\lambda^{-1} a,\lambda^{-1} b,\lambda c, \lambda d) = H(w;a,b,c, d).
\ee
Using repeatedly the relation
$t^{-\fr12}\widehat T_j^{-1}(w^{-1}+z_{j}^{-1})=(t^{-1}w^{-1}+z_{j+1}^{-1})$ 
we obtain easily the components of all the configurations without
first class particles. For $J=\{j_1,\dots,j_k\}$, $1\leq j_1< j_2<\cdots j_k\leq N$ call
$ \circ(J)_N $ the configuration with empty space in positions $J$ and
second class particles in all the other sites. Then we have
$$
\psi_{\circ(J)_N}(\bz)=\oint_0 \frac{dw}{2\pi i w}
H(w;a,b,t^mc, t^m d) \prod_{i=1}^k (t^{i-j_i}w^{-1}+z_{j_i}^{-1}). 
$$
For configurations of the kind $\ast^m \circ^k $, combining the
previous equation with eq.(\ref{H-hom}) we
  obtain the analog of eq.(\ref{recursion-psi-ref})  
\be\label{recursion-psi-ref2}
\psi_{ \ast^m\circ^k}(\bz;a,b,c,d)=\psi_{\ast^{m-1}\circ^k}(\bz_{\hat 1};t^{-1}a,t^{-1}b ,c
,d).
\ee
Analogous integral formulas can be written for the components of the
configurations without empty sites.
For later reference
we write explicitly $h_1$
\be\label{h1}
h_1(a,b,c,d)= \frac{a+b-c^{-1}-d^{-1}}{c^{-1}d^{-1}-ab}.
\ee

Another interesting remark is that from a close inspection it turns
out that the three-terms recursion
(\ref{rec-h_i}) is nothing else than the 
recursion relation for Al-Salam Chihara polynomials in disguise (see
eq.(\ref{al-salam-chi-rec})) therefore we 
have also
\be
h_n(a,b,c,d) =
\frac{(-1)^nc^{\frac{n}{2}}d^{\frac{n}{2}}}{(abcd;t)_n}Q_n\left(\frac{1}{2}
\left(\sqrt{\frac{c}{d}}+\sqrt{\frac{d}{c}}\right)  
;c^\fr12 d^\fr12 a, c^\fr12 d^\fr12b|t\right).
\ee

\subsection{Other components and normalization}

Once the component $\psi_{\circ^k \ast^m}$ is known, all the others are
obtained from it by action of the generators $\widehat T_i$ using the exchange
equations (\ref{exch-comp2}). Well, actually it is not necessary to
use the full algebra $\cH_N$, $\cH_N^{\pp 0}$ being enough. 
Moreover the state $\psi_{\circ^k \ast^m}$ is preserved by the generators
$\widehat T_i$ for $1\leq i\neq N-m~\leq~N$.
In order to see how to remove this redundancy and to provide a compact
formula for any component $\psi_w(\bz)$ we need to concentrate for a
moment on the Weyl group $\cW_N^{\pp 0}$ and its action on $\cQ(N,m)$.  
The state $w(N-m,m)$ is preserved by $I:=\cS_{N-m}\times \cW^{\pp 0}_m$,
which is a parabolic subgroup of $\cW_N^{\pp 0}$ i.e. a subgroup generated
by a subset of the generators of $\cW_N^{\pp 0}$. In the present case $I$ is
generated by ${\bf s}_i$ for $1\leq i \neq N-m \leq N$.
The states in $\cQ(N,m)$ are in bijection with elements in $\cW_N^{\pp 0}/I$.
On the other hand, from the basic properties of parabolic subgroups of a
Coxeter group (see \cite{humphreys1992reflection} Chapter 1) 
we know that the set 
$\cW^I$ defined 
$$
\cW^I = \{{\bf g}\in \cW^0_N| \ell({\bf gs})> \ell({\bf g}), \forall
   {\bf s}\in I\}.
$$
is in bijection with classes in the
quotient $\cW_N^{\pp 0}/I$, or equivalently, that any classes in
$[{\bf g}]\in\cW_N^{\pp 0}/I$ 
 contains a unique shortest representative ${\bf h}\in\cW^I$  and 
 therefore any ${\bf g}\in \cW_N^{\pp 0}$ can be 
written in a unique way as a product ${\bf g}={\bf hs}$ with ${\bf h}\in \cW^I$  and
${\bf s}\in I$. Moreover, any reduced expression of ${\bf g}$ is the
product of a reduced expression of ${\bf h}$ times a reduced expression of
${\bf s}$, which in particular implies that at the level of the Hecke algebra, we have a 
decomposition
\be\label{decomp-hs}
T_{\bf g}= T_{\bf h} T_{\bf s}
\ee
with ${\bf h}\in W^I$ and ${\bf s}\in I$, and 
\be\label{decomp-chi-hs}
\chi(T_{\bf g})=\chi(T_{\bf h})\chi(T_{\bf s}).
\ee
The bijection between $W^I$ and $\cQ(N,m)$ is now obvious
$$
{\bf h} \mapsto {\bf h}\circ^k \ast^m.
$$
Coming back to the components we easily check that for ${\bf h}\in
W^I$ we have
\be
\psi_{{\bf h}\circ^k \ast^m}(\bz) = \chi(T_{\bf h})T_{{\bf h}}  \psi_{\circ^k \ast^m}(\bz).
\ee

It is now easy to prove the following theorem that identify the
normalization of $\Psi_{N,m}(\bz)$ (sometimes called also the partition function) with a
Macdonald-Koornwinder polynomial \cite{koornwinder1992askey}
\begin{theorem}\label{theo-mac-koorn}
Let 
\be
Z_{N,m}(\bz):=\sum_{w\in \cQ(N,m)}\psi_w(\bz)
\ee
Then we have
\begin{equation}\label{int-form-MD-Koornw}
Z_{N,m}(\bz)= P_{\eta(N,m)}(\bz).
\end{equation}
where $P_{\eta(N,m)}(\bz)$ is the Macdonald-Koornwinder polynomial \cite{koornwinder1992askey}
associated to the partition
$$
\eta(N,m)=\{\underbrace{1,\dots,1}_{N-m},\underbrace{0,\dots,0}_{m} \} .
$$
\end{theorem}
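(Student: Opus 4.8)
The plan is to realise $Z_{N,m}(\bz)$ as a Hecke symmetrisation of the reference component and then identify it with a symmetric Koornwinder polynomial. First I would use the description of a generic component obtained above: for ${\bf h}\in\cW^I$ one has $\psi_{{\bf h}\circ^k\ast^m}(\bz)=\chi(T_{\bf h})\,T_{\bf h}\,\psi_{\circ^k\ast^m}(\bz)$, and from (\ref{exch-comp2}) the character $\chi$ assigns to a reduced word the product of the corresponding $t_i^{\fr12}$; writing $t_{\bf h}:=\chi(T_{\bf h})$, and recalling $\psi_{\circ^k\ast^m}(\bz)=E_{\mu(k,m)}(\bz)$ together with the bijection $\cQ(N,m)\cong\cW^I$, this gives
\[
Z_{N,m}(\bz)=\sum_{{\bf h}\in\cW^I}t_{\bf h}\,T_{\bf h}\,E_{\mu(k,m)}(\bz).
\]

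Next I would promote this partial sum to a full symmetrisation. By (\ref{exch-comp1}) applied to the word $\circ^k\ast^m$, the polynomial $E_{\mu(k,m)}(\bz)$ is a $t_i^{\fr12}$-eigenvector of $\widehat T_i$ for every ${\bf s}_i$ generating the parabolic subgroup $I=\cS_{N-m}\times\cW_m^{\pp 0}$ (these are exactly the ${\bf s}_i$, $1\le i\neq N-m\le N$, fixing $\circ^k\ast^m$), hence $T_{\bf s}E_{\mu(k,m)}=t_{\bf s}E_{\mu(k,m)}$ for every ${\bf s}\in I$. Using the factorisation $\cW_N^{\pp 0}=\cW^I\cdot I$ with additive lengths and the decomposition (\ref{decomp-hs}), the Hecke symmetriser ${\bf 1}_+:=\sum_{w\in\cW_N^{\pp 0}}t_w\,T_w$ of $\cH_N^{\pp 0}$ then satisfies
\[
{\bf 1}_+\,E_{\mu(k,m)}(\bz)=\Big(\sum_{{\bf h}\in\cW^I}t_{\bf h}\,T_{\bf h}\Big)\Big(\sum_{{\bf s}\in I}t_{\bf s}^{2}\Big)E_{\mu(k,m)}(\bz)=P_I\cdot Z_{N,m}(\bz),
\]
with $P_I=\sum_{{\bf s}\in I}t_{\bf s}^{2}$ the Poincaré polynomial of $I$. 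Thus $Z_{N,m}(\bz)=P_I^{-1}\,{\bf 1}_+E_{\mu(k,m)}(\bz)$.

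At this point I would invoke the standard property of symmetrisations of non-symmetric Koornwinder polynomials recorded in Appendix \ref{algebraic}: for any string $\mu$ the Laurent polynomial ${\bf 1}_+E_\mu$ is $\cW_N^{\pp 0}$-invariant (because ${\bf 1}_+T_i=t_i^{\fr12}{\bf 1}_+$, while in the Noumi representation $\widehat T_i g=t_i^{\fr12}g$ forces ${\bf s}_i g=g$), is a joint eigenfunction of the symmetric Koornwinder $q$-difference operator with the eigenvalue indexed by $\mu^+$, and is triangular in the monomial basis with leading term $\bz^{\mu^+}$; hence ${\bf 1}_+E_\mu$ is a scalar multiple of $P_{\mu^+}$. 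Since $\mu(k,m)=\{(-1)^k,0^m\}$ is anti-dominant with $\mu(k,m)^+=\eta(N,m)$, we conclude $Z_{N,m}(\bz)=c\,P_{\eta(N,m)}(\bz)$ for some constant $c$ depending only on $t$, $t_0=-ab$, $t_N=-cd$.

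It remains to show $c=1$, and this I would do by comparing the coefficient of the extremal monomial $\bz^{\mu(k,m)}=z_1^{-1}\cdots z_k^{-1}$ on the two sides. This monomial occurs in the orbit sum $m_{\eta(N,m)}$ only, so its coefficient equals the coefficient of $m_{\eta(N,m)}$ in the monomial expansion, which is $1$ in $P_{\eta(N,m)}$ by the normalisation of Macdonald--Koornwinder polynomials; on the other hand, in $Z_{N,m}=\sum_w\psi_w$ the explicit integral formulas for the components show that $\bz^{\mu(k,m)}$ is carried only by the reference component $\psi_{\circ^k\ast^m}=E_{\mu(k,m)}$ (the remaining components involve first class particles injected through $\widehat T_0$, which shifts the relevant exponents away from this one), and there it has coefficient $h_0^{(k,m)}=1$ by (\ref{koorn-exp1}). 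Hence $c=1$ and (\ref{int-form-MD-Koornw}) follows. I expect the genuinely delicate point to be this last normalisation step (equivalently, if one prefers to read off $c$ from a closed symmetrisation formula, the bookkeeping of the $t$-, $t_0$- and $t_N$-powers that must cancel $P_I$); Steps 1--3 are routine once the material of Sections \ref{integrability}--\ref{analysis} and Appendix \ref{algebraic} is in place.
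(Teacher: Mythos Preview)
Your proposal is correct and follows essentially the same route as the paper: both express $Z_{N,m}$ as the $\cW^I$-part of the full Hecke symmetriser applied to $E_{\mu(k,m)}$, invoke Sahi's result (Theorem~\ref{theo-sahi}) that this symmetriser produces a scalar multiple of $P_{\mu^+}=P_{\eta(N,m)}$, and then pin down the constant by matching the coefficient of $\bz^{\mu(k,m)}$. The only quibble is your parenthetical explanation of why the other components miss this monomial: the components indexed by $\cW^I\setminus\{e\}$ are built using the \emph{finite} Hecke generators $T_1,\dots,T_N$ (no $\widehat T_0$ is involved), and first class particles arise via $s_N$ acting on a $\circ$ moved to position $N$, not via $\widehat T_0$; the paper simply asserts the claim without further justification, and the clean reason is the triangularity of the $T_{\bf h}$-action with respect to $\preceq$.
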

\begin{proof}
From \cite{sahi1999nonsymmetric} (see Theorem \ref{theo-sahi} in
Appendix \ref{algebraic})  we know that
$$
P_{\eta(N,m)}(\bz) \propto \sum_{{\bf g}\in \cW_N^{\pp 0}} \chi(T_{\bf g})T_{\bf g} \psi_{\circ^k \ast^m}(\bz).
$$
Now rewrite the sum over ${\bf g}\in \cW_N^{\pp 0}$ as a double sum over 
${\bf h}\in \cW^I$ and  over ${\bf s}\in I$, using eqs.(\ref{decomp-hs},\ref{decomp-chi-hs})
$$
P_{\eta(N,m)}(\bz) \propto \sum_{{\bf h}\in \cW^I} \sum_{{\bf s}\in
  I}\chi(T_{\bf h})\chi(T_{\bf s})T_{\bf h}T_{\bf s}
\psi_{\circ^k \ast^m}(\bz). 
$$
For ${\bf s}\in I$ one has that $T_{\bf
  s}\psi_{\circ^k \ast^m}(\bz)\propto \psi_{\circ^k \ast^m}(\bz)$ and hence
$$
\sum_{{\bf s}\in
  I}\chi(T_{\bf s})T_{\bf s}\psi_{\circ^k \ast^m}(\bz)\propto \psi_{\circ^k \ast^m}(\bz).
$$
Therefore we obtain
$$
P_{\eta(N,m)}(\bz) \propto \sum_{{\bf h}\in \cW^I}\chi(T_{\bf
  h})T_{\bf h}\psi_{\circ^k \ast^m}(\bz) = \sum_{w\in \cQ(N,m)} \psi_w(\bz).
$$
In order to fix the proportionality constant we recall that
$P_{\eta(N,m)}(\bz)$ is normalized in such a way that the monomial
$\bz^{\mu(N-m,m)}$  has coefficient $1$. On the other side, in the sum
defining $Z_{N,m}(\bz)$, such monomial  comes
only from the component $\psi_{\circ^k \ast^m}(\bz)$ and has coefficient
$1$. Therefore we conclude that the proportionality factor is $1$.
\end{proof}
Actually we can do even better and consider the generating function
for the number of particles in the system. 
Calling $\bullet(w)$ the number of particles present in the state
$w$, what we would like to consider is the weighted partition function
\be
Z_N(\xi;\bz):= \sum_{w\in \cQ(N,m)} \xi^{\bullet(w)}\psi_w(\bz).
\ee
In order to state our result it is convenient to switch to a notation
in which the dependence on the parameters $a,b,c,d$ is explicit
writing  $Z_{N,m}(\bz;a,b,c,d)$ for the normalization. Then we have
\begin{theorem}\label{gener-funct}
The generating function for the number of particles in the system is
given by
\be
Z_{N,m}(\xi^2;\bz;a,b,c,d) = \xi^{N-m}Z_{N,m}(\xi\bz; a_\xi, b_\xi,c_\xi,d_\xi).
\ee
with $a_\xi=\xi~a, b_\xi= \xi~b, c_\xi=\xi^{-1}c, d_\xi=\xi^{-1}d$.
\end{theorem}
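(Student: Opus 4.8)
The plan is to deduce the identity from Theorem~\ref{theo-mac-koorn} by recognizing the fugacity as a diagonal ``charge'' transformation of the inhomogeneous steady state and then invoking the uniqueness Theorem~\ref{unicity}. Let $\langle s|=\bigotimes_{i=1}^N(\langle v_{-1}|+\langle v_0|+\langle v_1|)$ be the flat summation covector and $G=g^{\otimes N}$ with $g\,v_j=\xi^{j}v_j$ (i.e. $g=\mathrm{diag}(\xi^{-1},1,\xi)$ in the basis $(v_{-1},v_0,v_1)$). Since on $w\in\cQ(N,m)$ the number of empty sites is $N-m-\bullet(w)$, we have $2\bullet(w)=\bigl(\bullet(w)-(N-m-\bullet(w))\bigr)+(N-m)$, hence $Z_{N,m}(\xi^2;\bz;a,b,c,d)=\sum_{w}\xi^{2\bullet(w)}\psi_w(\bz)=\xi^{N-m}\,\langle s|\,G\,\Psi_{N,m}(\bz;a,b,c,d)$. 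It therefore suffices to identify $\langle s|\,G\,\Psi_{N,m}(\bz;a,b,c,d)$ with $\langle s|\,\Psi_{N,m}(\xi\bz;\xi a,\xi b,\xi^{-1}c,\xi^{-1}d)=Z_{N,m}(\xi\bz;\xi a,\xi b,\xi^{-1}c,\xi^{-1}d)$.

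The core of the argument is to show that $G\,\Psi_{N,m}(\bz;a,b,c,d)$ is, up to an overall scalar, the Laurent-polynomial solution of the exchange--reflection system (\ref{exch-open1})--(\ref{exch-open3}) attached to the rescaled data $(\xi\bz;\xi a,\xi b,\xi^{-1}c,\xi^{-1}d)$. The bulk relations (\ref{exch-open1}) are immediate: $\Rc_i$ depends only on the ratio $z_i/z_{i+1}$, so it is unaffected by $\bz\mapsto\xi\bz$, and it commutes with $G$ because ${\bf e}$ conserves, on the pair of sites it acts on, the total of the ``charges'' $-1,0,+1$ carried by $v_{-1},v_0,v_1$. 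The boundary relations (\ref{exch-open2})--(\ref{exch-open3}) are the delicate point: using the explicit matrices (\ref{KLexpl})--(\ref{KRexpl}) one must verify that conjugating $K_1(z_1|a,b)$ and $K_N(z_N|c,d)$ by $g$, rescaling the end spectral parameters, and keeping the reflections exactly reproduces $K_1(\,\cdot\,|\xi a,\xi b)$ and $K_N(\,\cdot\,|\xi^{-1}c,\xi^{-1}d)$; the stochasticity of $\Rc$ and $K$ is what lets one descend from such a statement to the scalar identity that the theorem actually asserts. It is precisely here that the particular reparametrization is forced, and the computation is most transparent when organized around the combinations it preserves --- $ac,ad,bc,bd,abcd$ and $z_i/a,\,z_i/b,\,z_ic,\,z_id,\,z_i/z_j$ --- only $ab\mapsto\xi^2 ab$ and $cd\mapsto\xi^{-2}cd$ being touched.

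To fix the scalar I would compare reference components. By the integral formula (\ref{int-form-non-symm}) and the homogeneity (\ref{H-hom}) of $H$ (substitute $w\mapsto\xi w$ in the contour integral), $E_{\mu(k,m)}(\xi\bz;\xi a,\xi b,\xi^{-1}c,\xi^{-1}d)=\xi^{-(N-m)}E_{\mu(k,m)}(\bz;a,b,c,d)$, while the $(\circ^k\ast^m)$-component of $G\,\Psi_{N,m}(\bz;a,b,c,d)$ carries exactly the same factor $\xi^{-(N-m)}$; so the two minimal polynomial solutions coincide by Theorem~\ref{unicity}. Contracting with $\langle s|$, and using Theorem~\ref{theo-mac-koorn} on the right-hand side, then gives the asserted identity, the prefactor $\xi^{N-m}$ being the one produced in the first step (and both sides being Macdonald--Koornwinder polynomials $P_{\eta(N,m)}$ at the respective data).

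The main obstacle is this boundary covariance: unlike the bulk $\Rc$-matrices, the $K$-matrices genuinely contain $a,b,c,d$, and one has to check by an explicit matrix computation that the joint effect of the diagonal charge conjugation and of the rescaling of the boundary spectral parameters is exactly absorbed into $(a,b,c,d)\mapsto(\xi a,\xi b,\xi^{-1}c,\xi^{-1}d)$ --- verifying in particular that the correct reflections after rescaling are still $z_1\mapsto z_1^{-1}$ and $z_N\mapsto z_N^{-1}$ --- so that the uniqueness theorem can be applied. Alternatively, once Theorem~\ref{theo-mac-koorn} has replaced $Z_{N,m}$ by $P_{\eta(N,m)}$, the identity can be read off directly from an explicit elementary-symmetric / contour-integral expansion of $P_{\eta(N,m)}$ and the homogeneity of its coefficients, exactly along the lines of the proof of Proposition~\ref{ref-comp-prop1}.
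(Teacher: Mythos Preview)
Your overall strategy is on target: the theorem is an immediate consequence of the vector identity $G\,\Psi_{N,m}(\bz;a,b,c,d)=\Psi_{N,m}(\xi\bz;a_\xi,b_\xi,c_\xi,d_\xi)$, which component by component reads $\xi^{2\bullet(w)}\psi_w(\bz;a,b,c,d)=\xi^{N-m}\psi_w(\xi\bz;a_\xi,b_\xi,c_\xi,d_\xi)$, and this is precisely the Proposition that the paper proves first and then sums. The bulk part of your argument is also correct: $\Rc_i$ depends only on $z_i/z_{i+1}$ and commutes with $G$.

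The gap is at the boundary, and it is not merely a computation waiting to be done --- the operator identity you propose is false. A direct calculation with (\ref{KLexpl}) gives
\[
g\,K_1(z_1|a,b)\,g^{-1}=\mathbf{1}+\frac{z_1^2-1}{(z_1-a)(z_1-b)}
\begin{pmatrix}-1&0&-\xi^{-2}ab\\0&0&0\\ \xi^2&0&ab\end{pmatrix},
\]
which for $\xi^2\neq1$ is \emph{not} of the form $K_1(\,\cdot\,|a',b')$ for any $a',b'$ (its first column no longer sums to zero; conjugation by $g$ destroys stochasticity). Moreover the reflection relevant to $\Psi_{N,m}(\xi\bz;\ldots)$, written in the original variable, is $z_1\mapsto\xi^{-2}z_1^{-1}$, and nothing in the system satisfied by $\Psi_{N,m}(\bz;a,b,c,d)$ produces $G\Psi$ at this shifted inversion. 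So Theorem~\ref{unicity} cannot be invoked along the route you sketch. Your fallback (pass to $P_{\eta(N,m)}$ via Theorem~\ref{theo-mac-koorn}) also stalls: that theorem applies to the \emph{unweighted} sum, so it does not give a Macdonald--Koornwinder polynomial on the left-hand side without the very component identity you are trying to establish.

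The paper circumvents the boundary obstruction by proving the component identity directly, by induction on $\bullet(w)$. For $w=\circ^k\ast^m$ it is the homogeneity (\ref{H-hom}), exactly your reference-component computation; propagation to all $w$ with the same $\bullet(w)$ uses only the bulk $\widehat T_i$, which are homogeneous in $\bz$ (your observation again); and the inductive step raises $\bullet(w)$ by one by expanding $\psi_{\tilde w\circ}=z_N^{-1}\phi^{(-1)}+\phi^{(0)}$ and computing $\psi_{\tilde w\bullet}=(z_N-c^{-1}-d^{-1})\phi^{(-1)}-c^{-1}d^{-1}\phi^{(0)}$ from $t_N^{-1/2}\widehat T_N^{-1}$. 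This explicit formula in the $z_N$-expansion is manifestly compatible with $(z_N,c,d)\mapsto(\xi z_N,\xi^{-1}c,\xi^{-1}d)$, whereas the operator $K_N$ itself is not $g$-covariant. In short, your reduction to the vector identity and the bulk/reference pieces are right; what is missing is a boundary argument that does not rely on an operator covariance that simply does not hold.
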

This theorem is just an immediate corollary of the following
\begin{proposition}
For any state $w\in \cQ(N,m)$ we have
\be\label{weigth-psi}
\xi^{2\bullet(w)}\psi_w(\bz;a,b,c,d) = \xi^{N-m} \psi_w(\xi\bz; a_\xi,
b_\xi,c_\xi,d_\xi). 
\ee
\end{proposition}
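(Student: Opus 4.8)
The plan is to verify the scaling identity \eqref{weigth-psi} by checking that the right-hand side, viewed as a function of $\bz$ with rescaled boundary parameters, satisfies the same exchange-reflection equations \eqref{exch-comp1}--\eqref{exch-comp2} (equivalently \eqref{exch-open1}--\eqref{exch-open3}) and the same normalization as the left-hand side, and then invoke the uniqueness result of Theorem \ref{unicity}. Concretely, define $\tilde\psi_w(\bz):=\xi^{-(N-m)}\xi^{2\bullet(w)}\psi_w(\bz;a,b,c,d)$ and $\hat\psi_w(\bz):=\psi_w(\xi\bz;a_\xi,b_\xi,c_\xi,d_\xi)$; the claim is $\tilde\psi_w=\hat\psi_w$. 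Since $\Psi_{N,m}(\bz)$ is the unique (up to an $\bs_i$-invariant factor) Laurent polynomial solution of the system, and since the overall normalization has been pinned down by $\psi_{\circ^k\ast^m}=E_{\mu(k,m)}$ with $h_0^{(k,m)}=1$, it suffices to show both families solve the exchange-reflection system and agree on the reference component.

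First I would establish the reference-component identity. By Proposition \ref{ref-comp-prop1}, $\psi_{\circ^k\ast^m}(\bz;a,b,c,d)=\oint_0\frac{dw}{2\pi i w}H(w;a,b,t^mc,t^md)\prod_{i=1}^k(w^{-1}+z_i^{-1})$. For the reference state $\bullet(\circ^k\ast^m)=0$, so the left side of \eqref{weigth-psi} is $\xi^{-(N-m)}\psi_{\circ^k\ast^m}(\bz;a,b,c,d)$; note $N-m=k$ here. On the right side, substituting $z_i\mapsto\xi z_i$ and $(a,b,c,d)\mapsto(\xi a,\xi b,\xi^{-1}c,\xi^{-1}d)$, the product becomes $\prod_{i=1}^k(\xi^{-1}w^{-1}+\xi^{-1}z_i^{-1})=\xi^{-k}\prod_{i=1}^k(w^{-1}+z_i^{-1})$ after also rescaling the integration variable $w\mapsto\xi w$, and by the homogeneity relation \eqref{H-hom}, $H(\xi^{-1}w;\xi^{-1}a,\xi^{-1}b,\xi t^mc,\xi t^m d)=H(w;a,b,t^mc,t^md)$ — this is exactly the content of \eqref{H-hom} with $\lambda=\xi$ and $c,d$ shifted by $t^m$. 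Tracking the $dw/w$ measure under $w\mapsto\xi w$ (which is scale-invariant), the two reference components match up to the overall $\xi^{-k}=\xi^{-(N-m)}$, as required.

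Next I would propagate the identity to all components using \eqref{exch-comp2}, i.e. $\psi_{s_iw}=t_i^{\fr12}\widehat T_i\,\psi_w$. The key observation is how the Noumi generators transform under the substitution $z_i\mapsto\xi z_i$, $(a,b,c,d)\mapsto(a_\xi,b_\xi,c_\xi,d_\xi)$. For the bulk operators $\widehat T_i=t^{\fr12}-(t^{\fr12}z_i-t^{-\fr12}z_{i+1})\partial_i$ with $\partial_i=(1-\bs_i)/(z_i-z_{i+1})$, the rescaling $z\mapsto\xi z$ is homogeneous of degree zero in the combination $(t^{\fr12}z_i-t^{-\fr12}z_{i+1})\partial_i$, so $\widehat T_i$ is invariant; applying $\widehat T_i$ across $s_iw$ changes $\bullet(w)$ by at most $\pm1$ depending on which pair is swapped, but for the transitions listed (moving a $\bullet$ past a $\circ$ or $\ast$, which do not change the particle count) one has $\bullet(s_iw)=\bullet(w)$, so the $\xi^{2\bullet(w)}$ prefactor is consistent. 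For the boundary operator $\widehat T_0=t_0^{\fr12}-t_0^{-\fr12}\frac{(z_1-a)(z_1-b)}{z_1}\partial_0$ at $q=1$ (where $t_0=-ab$): under $z_1\mapsto\xi z_1$, $a\mapsto\xi a$, $b\mapsto\xi b$ one has $t_0\mapsto -\xi^2 ab$, $\frac{(\xi z_1-\xi a)(\xi z_1-\xi b)}{\xi z_1}=\xi\frac{(z_1-a)(z_1-b)}{z_1}$ and $\partial_0\mapsto\xi^{-1}\partial_0$, so $\widehat T_0\mapsto\widehat T_0$ up to the $t_0^{\pm\fr12}$ rescaling by $\xi^{\pm1}$; crucially, $\widehat T_0$ injects a particle, raising $\bullet$ by $1$, and the mismatch $\xi^{\pm1}$ from $t_0^{\fr12}=(-\xi^2ab)^{\fr12}=\xi(-ab)^{\fr12}$ is precisely what accounts for the $\xi^2$ jump in the prefactor $\xi^{2\bullet(w)}$. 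The analogous computation for $\widehat T_N$ with $t_N=-cd\mapsto-\xi^{-2}cd$ handles particle removal symmetrically.

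The main obstacle is bookkeeping the $\xi$-powers precisely and confirming that the balance in the previous paragraph is exact rather than approximate — i.e. that every application of $t_i^{\fr12}\widehat T_i$ in building up a general $\psi_w$ from $\psi_{\circ^k\ast^m}$ contributes a power of $\xi$ that exactly matches the change in $2\bullet(w)$, with the boundary operators $\widehat T_0,\widehat T_N$ being the only ones that shift the particle count and the only ones carrying a nontrivial $\xi$-shift through $t_0,t_N$. Once this is checked on the generators, $\tilde\psi_w$ and $\hat\psi_w$ satisfy the same exchange system and agree on the reference component, so by Theorem \ref{unicity} (the uniqueness up to $\bs_i$-invariant factor, here fixed by the normalization) they coincide for all $w\in\cQ(N,m)$, which is \eqref{weigth-psi}. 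Theorem \ref{gener-funct} then follows by summing \eqref{weigth-psi} over $w$ weighted by $\xi^{\text{nothing extra}}$: $\sum_w\xi^{2\bullet(w)}\psi_w(\bz;a,b,c,d)=\xi^{N-m}\sum_w\psi_w(\xi\bz;a_\xi,\dots,d_\xi)=\xi^{N-m}Z_{N,m}(\xi\bz;a_\xi,b_\xi,c_\xi,d_\xi)$.
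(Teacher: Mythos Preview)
Your overall plan---check the reference component via \eqref{H-hom}, then propagate by the exchange equations---is exactly the paper's strategy, and your treatment of the reference component and of the bulk operators $\widehat T_i$ ($1\le i\le N-1$) is correct: those operators are homogeneous of degree zero in $\bz$ and do not change $\bullet(w)$, so the identity propagates among states with the same particle number.

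The gap is in your treatment of the boundary operators. You claim that under $z_1\mapsto\xi z_1$ one has $\partial_0\mapsto\xi^{-1}\partial_0$, but this is false: the reflection $\bs_0:z_1\mapsto z_1^{-1}$ does \emph{not} commute with the rescaling $z_1\mapsto\xi z_1$, since $\bs_0(\xi z_1)=\xi^{-1}z_1^{-1}\neq\xi\,\bs_0(z_1)$. Concretely, the denominator $z_1-z_1^{-1}$ becomes $\xi z_1-\xi^{-1}z_1^{-1}$, which is not $\xi(z_1-z_1^{-1})$ unless $\xi^2=1$. So $\widehat T_0$ (and likewise $\widehat T_N$) does \emph{not} transform into a clean multiple of itself under the simultaneous rescaling of $\bz$ and $(a,b,c,d)$, and your bookkeeping of the extra $\xi^{\pm1}$ ``accounting for the $\xi^2$ jump in $\xi^{2\bullet(w)}$'' breaks down. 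As a consequence you cannot conclude that $\hat\psi_w(\bz):=\psi_w(\xi\bz;a_\xi,b_\xi,c_\xi,d_\xi)$ satisfies the same boundary exchange relation (in the variable $\bz$) as $\tilde\psi_w$, and the appeal to Theorem~\ref{unicity} is not justified.

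The paper sidesteps this obstruction. After establishing the identity for $\circ^k\ast^m$ and propagating it within each fixed-$\bullet$ stratum by the (scale-invariant) bulk operators, it runs an induction on $\bullet(w)$ using $T_N$ rather than trying to push a scaling law through the boundary operator. The key extra input is that each component has degree at most one in $z_N^{\pm1}$: writing $\psi_{\tilde w\circ}(\bz)=z_N^{-1}\phi^{(-1)}(\bz_{\widehat N})+\phi^{(0)}(\bz_{\widehat N})$, the inductive hypothesis translates into separate homogeneity statements for $\phi^{(-1)}$ and $\phi^{(0)}$; then one computes $\psi_{\tilde w\bullet}=t_N^{-\fr12}T_N^{-1}\psi_{\tilde w\circ}=(z_N-c^{-1}-d^{-1})\phi^{(-1)}-c^{-1}d^{-1}\phi^{(0)}$ explicitly and checks \eqref{weigth-psi} for $\tilde w\bullet$ by hand. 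That direct computation is what replaces your (incorrect) covariance claim for $\widehat T_N$.
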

\begin{proof}
For $w=\circ^k \ast^m$ eq.(\ref{weigth-psi}) is equivalent to
eq.(\ref{H-hom}).
Then we claim that if
eq.(\ref{rec-h_i}) is true for some $w$ then it is true for all the
states with the same number of first particles. Indeed if
$\bullet(w)=\bullet(w')$, then $\psi_{w'}$ is obtained from $\psi_w$
through the action of the sole operators $T_i$ with $1\leq i \leq N-1$,
which are homogeneous in the variables $\bz$. 

The last step is an induction on the number of first class particles. Assume
eq.(\ref{rec-h_i}) is true for all $w$ such that $\bullet(w)\leq 
k$. Then consider a state $\bar w$ with $\bullet(\bar w)=n$, of the form
$\bar w=\tilde w \circ$. Writing $\psi_{\tilde w\circ }(\bz)$ as
$$
\psi_{\tilde w\circ}(\bz)=z_N^{-1}\phi^{(-1)}(\bz_{\widehat N})+\phi^{(0)}(\bz_{\widehat N}),
$$ 
the relation (\ref{rec-h_i}) translates into
\be
\begin{split}\label{homogeneity}
\xi^k\phi^{(0)}(\bz_{\widehat N};a,b,c,d) &= \xi^{N-m} \phi^{(0)}(\xi \bz_{\widehat N};a_\xi,
b_\xi,c_\xi,d_\xi)\\
\xi^k\phi^{(1)}(\bz_{\widehat N};a,b,c,d) &= \xi^{N-m-1}
\phi^{(1)}(\xi\bz_{\widehat N}, a_\xi, 
b_\xi,c_\xi,d_\xi).
\end{split}
\ee
Acting  with $t_N^{-\fr12}T_N^{-1}$ on the component $\psi_{\tilde
  w\circ}(\bz)$ we get 
$$
\psi_{\tilde w\bullet}(\bz) =
(z_N-c^{-1}-d^{-1})\phi^{(-1)}(\hat\bz)-c^{-1}d^{-1}\phi^{(0)}(\hat\bz).  
$$
Using this expression and eqs.(\ref{homogeneity}) we easily verify
that $\psi_{\tilde w\bullet}(\bz)$ satisfies eq.(\ref{rec-h_i}).
\end{proof}

A more explicit formula for the weighted partition function
$Z_{N,m}({\xi^2;\bf z})$, which is well suited for asymptotic
analysis, is obtained in terms of an integral
representation of the Macdonald-Koornwinder polynomials \cite{mimachi2001duality}
(see Appendix
\ref{symm-macd}). Assuming $t<1$ we can write
\begin{multline}
\label{integral-Z}
Z_{N,m}(\xi^2;{\bf z}) =r_m^{-1}(a_\xi,b_\xi,c_\xi,d_\xi|t) \times \\ 
\xi^{N-m}\oint_\cC
\frac{dx}{4\pi i x} \Pi(\bz,x) w(x;a_\xi,b_\xi,c_\xi,d_\xi|t) p_{m}(x;a_\xi,b_\xi,c_\xi,d_\xi|t) 
\end{multline}
where $p_{m}(x;a,b,c,d|t)$  is the $m$-th Askey-Wilson polynomial of
base $t$ in the variable $\frac{x+x^{-1}}{2}$, 
 $w(x;a,b,c,d|t)$   is the Askey-Wilson kernel 
\be
w(x;a,b,c,d|t)=\frac{(x^2,x^{-2};t)_\infty}{
  (ax,ax^{-1},bx,bx^{-1},cx,cx^{-1},dx,dx^{-1};t)_\infty},
\ee
while $\Pi(\bz,\bx) = \prod_{1\leq i \leq N}
  (z_i+z_i^{-1}-x-x^{-1})$ and the normalization constant $r_m$ is given by
\be
r_m(a,b,c,d|t)=
\frac{(abcdt^{2m};t)_\infty}{(t^{m+1},abt^m,act^m,adt^m,bct^m,bdt^m,cdt^m;t)_\infty}.  
\ee 
The contour of integration $\cC$ encircles the poles in
$a_\xi t^k,b_\xi t^k,c_\xi t^k,d_\xi t^k$ ($k\in \bZ_+$) and excludes all the others. 

In the homogeneous limit, our result for $Z_{N,m}(\xi;{\bf z})$ looks
quite different and simpler than the analogous result in
\cite{uchiyama2008two}, where Uchiyama computes the grand partition
function. 
A more direct comparison can be done for $m=0$ with the expression for
$Z_{N,m}(\xi;{\bf 1})$ presented in
\cite[Eq.(6.2)]{uchiyama2004asymmetric} (their parameters $a,b,c,d,q$
are related to ours by $q=t^{-1}, a\leftrightarrow -a^{-1}, d\leftrightarrow
-d^{-1},b\leftrightarrow-c^{-1}$). Since $p_0(x|t)=1$,  we see that the integral
in eq.(\ref{integral-Z}) coincides with the one in
\cite[Eq.(6.2)]{uchiyama2004asymmetric}, nonetheless the two formulas
differ by a prefactor which depends on $\xi$ (and therefore cannot be
accounted by a different normalization of the wave function), but not on
the size $N$ of the system. This gives
rise at finite size to different predictions for the average
occupation number of first class particles.

\section{Recursion relations}\label{sect-rec-rel}

In this Section we will discuss some relations between solutions of the
exchange-reflection equations for systems of different size and different values
of the parameters $a,b,c,d$. This will allow us to derive some
contiguous relations for Askey-Wilson polynomials. 
Whenever we shall compare solutions corresponding to different values of
the parameter $a,b,c,d$ we shall keep
explicit the dependence on these parameters.
   
\begin{theorem}\label{recursion-theo}
The following three kinds of recursive relations holds
\be\label{recurs1}
\begin{split}
\psi_{w\ast}(\bz;a,b,c,d) &= \psi_{w}(\bz_{\widehat N};a,b,tc,td)\\
\psi_{\ast w}(\bz;a,b,c,d) &= \psi_{w}(\bz_{\hat 1};ta,tb,c,d).
\end{split}
\ee
\be\label{recurs2}
\begin{split}
\psi_{w\circ}(\bz) + c
d\psi_{w\bullet}(\bz)&= \frac{(1-cz_N)(1-dz_N)}{z_N}
\psi_{w}(\bz_{\widehat N})\\
ab\psi_{\circ w}(\bz) + 
\psi_{\bullet w}(\bz)&= \frac{(a-z_1)(b-z_1)}{z_1}
\psi_{w}(\bz_{\hat 1})
\end{split}
\ee
\be\label{recurs3}
\begin{split}
\psi_{w\circ}(\bz;a,b,c,d)|_{z_N=c^{-1}}&=K_R(c)
\psi_{w}(\bz_{\widehat N};a,b,tc,d)\\
\psi_{w\circ}(\bz;a,b,c,d)|_{z_N=d^{-1}}&=K_R(d)
\psi_{w}(\bz_{\widehat  N};a,b,c,td)\\
\psi_{w\bullet}(\bz;a,b,c,d)|_{z_N=c^{-1}}&=-c^{-1}d^{-1}K_R(c)
\psi_{w}(\bz_{\widehat N};a,b,tc,d)\\
\psi_{w\bullet}(\bz;a,b,c,d)|_{z_N=d^{-1}}&=-c^{-1}d^{-1}K_R(d)
\psi_{w}(\bz_{\widehat  N};a,b,c,td)
\end{split}
\ee
\be\label{recurs4}
\begin{split}
\psi_{\circ w}(a,b,c,d;\bz)|_{z_1=a}&=K_L(a)
\psi_{w}(ta,b,c,d;\bz_{\hat 1})\\
\psi_{\circ w}(a,b,c,d;\bz)|_{z_1=b}&=K_L(b)
\psi_{w}(a,tb,c,d;\bz_{\hat 1})\\
\psi_{\bullet w}(a,b,c,d;\bz)|_{z_1=a}&=-abK_L(a)
\psi_{w}(ta,b,c,d;\bz_{\hat 1})\\
\psi_{\bullet w}(a,b,c,d;\bz)|_{z_1=b}&=-abK_L(b)
\psi_{w}(a,tb,c,d;\bz_{\hat 1})
\end{split}
\ee
where $\bz_{\hat \ell}$ means that the variable $z_{\ell}$ is absent
and
\be\label{prop-const}
\begin{split}
K_R(x)&=-\frac{(1-axt^m)(1-bxt^m)cdx^{-1}}{1-abcdt^{2m}}\\
K_L(x)&=\frac{(1-cxt^m)(1-dxt^m)x^{-1}}{1-abcdt^{2m}}
\end{split}
\ee
\end{theorem}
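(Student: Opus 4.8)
The plan is to reduce every one of the four families to the single reference component $\psi_{\circ^k\ast^m}(\bz)$ of Proposition \ref{ref-comp-prop1} and then to propagate along the exchange equations. Recall that every component is obtained from the reference one by $\psi_{{\bf h}\circ^k\ast^m}(\bz)=\chi(T_{\bf h})\,T_{\bf h}\,\psi_{\circ^k\ast^m}(\bz)$ with ${\bf h}\in\cW^I$, in the Noumi representation at $q=1$, and that the bulk generators $\widehat T_1,\dots,\widehat T_{N-2}$ act only on $z_1,\dots,z_{N-1}$ while all $\widehat T_i$ with $i\le N-1$ are $\bz$-homogeneous of degree $0$. For each identity the scheme is: (i) verify it on one convenient configuration using the contour-integral formula for the components without first class particles; (ii) show that the operation on the right-hand side — erasing the variable $z_N$ (or $z_1$), specialising $z_N$ to $c^{-1}$ or $d^{-1}$ (resp.\ $z_1$ to $a$ or $b$), together with the indicated shift of the boundary parameters — intertwines the $\cH_N$-action that builds $\psi_w(\bz)$ out of the reference component with the corresponding $\cH_{N-1}$-action of the shorter chain; (iii) conclude by the uniqueness statement of Theorem \ref{unicity}, the normalisation being fixed on the reference component. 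Throughout, the second and fourth families follow from the first and third by the left--right mirror symmetry $a\leftrightarrow c^{-1}$, $b\leftrightarrow d^{-1}$, $t\leftrightarrow t^{-1}$, $z_i\leftrightarrow z_{N+1-i}^{-1}$, so I only discuss the right-boundary versions.

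For \eqref{recurs1}, the identity $\psi_{w\ast}(\bz;a,b,c,d)=\psi_w(\bz_{\widehat N};a,b,tc,td)$ holds for the reference word $w=\circ^k\ast^{m-1}$ by \eqref{recursion-psi-ref}; in particular its left-hand side does not depend on $z_N$. Viewed as functions of $\bz_{\widehat N}$, both sides satisfy \eqref{exch-comp1}--\eqref{exch-comp2} for the length-$(N-1)$ chain with parameters $(a,b,tc,td)$ in the sector with $m-1$ second class particles: the bulk relations for $1\le i\le N-2$ and the left reflection are immediate since they do not involve $z_N$, while the right reflection is precisely the statement that a second class particle frozen at site $N$ absorbs one $\Rc$-matrix into the shift $(c,d)\mapsto(tc,td)$, which is the computation already performed in the proof of Proposition \ref{ref-comp-prop1}. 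Theorem \ref{unicity} then forces equality. For \eqref{recurs2}, the boundary rule $\psi_{w\circ}(\bz)=t_N^{\fr12}\widehat T_N\,\psi_{w\bullet}(\bz)$, the explicit form of $\widehat T_N$ and $t_N=-cd$ give
\[
\psi_{w\circ}(\bz)+cd\,\psi_{w\bullet}(\bz)=\frac{(1-cz_N)(1-dz_N)}{z_N}\,\bigl(-\partial_N\psi_{w\bullet}(\bz)\bigr),
\]
so \eqref{recurs2} is equivalent to the site-deletion identity $\partial_N\psi_{w\bullet}(\bz)=-\psi_w(\bz_{\widehat N})$, which is checked on a seed configuration with a single $\bullet$ at the last site via the integral formula and propagated with the bulk generators $\widehat T_1,\dots,\widehat T_{N-1}$.

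For \eqref{recurs3}, \eqref{recurs2} expresses $\psi_{w\bullet}$ through $\psi_{w\circ}$, and since $(1-cz_N)(1-dz_N)z_N^{-1}$ vanishes at $z_N=c^{-1}$ and at $z_N=d^{-1}$, the two $\psi_{w\bullet}$-lines follow from the two $\psi_{w\circ}$-lines with the factor $-c^{-1}d^{-1}$; it remains to prove $\psi_{w\circ}(\bz;a,b,c,d)\big|_{z_N=c^{-1}}=K_R(c)\,\psi_w(\bz_{\widehat N};a,b,tc,d)$. At $z_N=c^{-1}$ the right scattering matrix $K_N(z_N|c,d)$ degenerates, and the standard fusion mechanism shows that $\widehat T_{N-1}$ evaluated at $z_N=c^{-1}$, together with $\widehat T_N$, acts on the remaining variables as the right-boundary generator of the length-$(N-1)$ Noumi representation with $c$ replaced by $tc$; hence the specialisation intertwines the two Hecke actions and it suffices to verify the identity on one configuration. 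Taking $w\circ=\circ^{k-1}\ast^m\circ$ and using the integral formula reduces the claim to the contiguity relation
\[
h_{n+1}(a,b,c,d)+c\,h_n(a,b,c,d)=-\frac{(1-ac)(1-bc)\,d}{1-abcd}\,h_n(a,b,tc,d),
\]
which is verified directly from the three-term recursion \eqref{rec-h_i} (equivalently from the closed form \eqref{sol-h}); propagation then gives general $w$, and \eqref{recurs4} is the mirror statement. The main obstacle is step (ii) for the third family, namely making the fusion at $z_N=c^{-1}$ precise — identifying $\widehat T_{N-1}\big|_{z_N=c^{-1}}$ with the boundary generator of the shorter chain at parameter $tc$ — and keeping exact track of the powers of $t$ and of the prefactors $K_R$, $K_L$; once the reference component is known in closed form the remaining verifications are routine.
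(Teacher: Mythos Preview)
Your proposal follows essentially the same route as the paper: verify each identity on a single seed configuration (using the contour-integral form of $\psi_{\circ^k\ast^m}$ from Proposition~\ref{ref-comp-prop1}) and then propagate to all $w$ via the Hecke generators $\widehat T_i$ that do not touch the frozen boundary site, with the fusion/intertwining at $z_N=c^{-1}$ being the crux for \eqref{recurs3}. Two small differences are worth flagging. First, your reduction of the $\psi_{w\bullet}$-lines in \eqref{recurs3} to the $\psi_{w\circ}$-lines via \eqref{recurs2} is a clean shortcut; the paper instead treats them separately. Second, for \eqref{recurs3} the paper uses Theorem~\ref{unicity} to get proportionality and then fixes $K_R(c)$ by reading off the coefficient of the leading monomial in $\psi_{\ast^m\circ^k}|_{z_N=c^{-1}}$, which requires only the explicit value of $h_1$; you instead verify the full identity on a seed configuration, which amounts to the contiguity relation $h_{n+1}+c\,h_n=-\tfrac{(1-ac)(1-bc)d}{1-abcd}\,h_n(a,b,tc,d)$ for \emph{all} $n$. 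In the paper this relation is a \emph{consequence} of \eqref{recurs3}, so to avoid circularity you must check it independently from \eqref{sol-h} or by induction on the three-term recursion \eqref{rec-h_i}; this is doable but is the extra work your route incurs. The fusion step you label ``the main obstacle'' is exactly the identity the paper proves, namely $\bigl(T_N^{(N)}(c)T_{N-1}^{(N)}(c)\bigr)^{-1}H(z_{N-1},z_N)\big|_{z_N=c^{-1}}=\bigl(T_{N-1}^{(N-1)}(tc)\bigr)^{-1}H(z_{N-1},c^{-1})$ for symmetric $H$, and it is a direct computation.
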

\begin{proof}
First we claim that 
if eqs.(\ref{recurs1}-\ref{recurs4}) hold for some $w\in \cQ(N,m)$ then
they hold for any configurations in $\cQ(N,m)$. 
Indeed, for the equations in which the particle configuration at site
$N$ is fixed, 
one can modify $w$ by acting on both sides of the
equalities either with operators $T_i$ with $0\leq i<N-1$, while for the
equations in which it is the first site configuration to be fixed, one
can modify $w$ by acting with operators $T_i$ with $1< i\leq N$.

We shall use the previous remark to prove
eqs.(\ref{recurs1},\ref{recurs2}). We could use it also to prove
eqs.(\ref{recurs3},\ref{recurs4}) but we prefer to adopt a different strategy. 

The first equation of (\ref{recurs1}), in the case $w=\circ^k \ast^m$
coincides with eq.(\ref{recursion-psi-ref}), while the second equation
of  (\ref{recurs1}) for $w=\ast^m\circ^k $ coincides with
eq.(\ref{recursion-psi-ref2}). 

In order to prove the second of eqs.(\ref{recurs2}) for $w=\circ^{k-1} \ast^{m}$,
we use the integral expression of 
$\psi_{\circ^k \ast^{m}}(a,b,c,d;\bz)$  given by
eq.(\ref{int-form-non-symm}) 
\begin{gather*}
ab\psi_{\circ \circ^{k-1} \ast^{m}}(\bz)+\psi_{\bullet \circ^{k-1} \ast^{m}}(\bz) =\\
(ab+t_0^\fr12 T_0)\psi_{\circ^{k} \ast^{m}}(\bz)=\\
(ab+t_0^\fr12 T_0)\oint_0\frac{dw}{2\pi
  iw}F_m(w)\prod_{j=1}^{k}\left(w^{-1}+z_j^{-1}\right)=\\
 \frac{(a-z_1)(b-z_1)}{z_1}\oint_0\frac{dw}{2\pi
  iw}F_m(w)\prod_{j=2}^{k}\left(w^{-1}+z_j^{-1}\right)=\\
\frac{(a-z_1)(b-z_1)}{z_1}\psi_{\circ^{k-1} \ast^{m}}(\bz_{\hat 1})
\end{gather*}
where in the third equality we have used the identity
$$
(ab+ t_0^\fr12 T_0)(w^{-1}+z_1^{-1})=  \frac{(a-z_1)(b-z_1)}{z_1}.
$$
In a similar way one proves the first of eqs.(\ref{recurs2}).

For the proof of eqs.(\ref{recurs3}) we adopt a different strategy: we
show that the  l.h.s.  
satisfies the same exchange-reflection equations as the r.h.s. and
therefore by Theorem \ref{unicity} they must be proportional.

Lets look at the first of these equations, since we are comparing
systems of different length we must distinguish between
operators representing the Affine Hecke algebra of different rank and
different parameters therefore just for the present proof we shall use
an heavier notations. For 
the case of rank $N$ with parameter $c$ highlighted we shall write the
generators as $T_i^{(N)}(c)$. In terms of divided difference operators we have
of course $T_i^{(N)}(c) =  T_i^{(N-1)}(c')$ for $i<N-2$ and any $c,
c'$, therefore both
sides of the first of eqs.(\ref{recurs3}) satisfies
eqs.(\ref{exch-comp1},\ref{exch-comp2}) for 
$i<N-2$ and $T_i^{(N)}(tc)$. 
It remains to prove  that the r.h.s. satisfies
eqs.(\ref{exch-comp2}) for $i=N-1$. If $w= \tilde w \ast$ then
$\psi_{w\circ}$ does not depend on $z_{N-1}$ so in 
particular it is invariant under $z_{N-1}\leftrightarrow z_{N-1}^{-1}$.
If $w= \tilde w \circ$ then $\psi_{\tilde w \circ\circ}$ is symmetric in
$z_{N-1},z_N$ and we have   
\be\label{compos1}
\psi_{\tilde w \bullet\circ}(\bz) = \left((-tcd)^\fr12T_N^{(N)}(c)T_{N-1}^{(N)}(c)\right)^{-1}\psi_{\tilde w \circ\circ}(\bz)
\ee
It is straightforward to verify that for any symmetric function $H(x,y)$
one has the following identity
$$
 \left(T_N^{(N)}(c)T_{N-1}^{(N)}(c)\right)^{-1}H(z_{N-1},z_{N})|_{z_N=c^{-1}}
 =  \left( T_{N-1}^{(N-1)}(tc)\right)^{-1}H(z_{N-1},c^{-1}).
$$ 
This means that after specialization $z_N=c^{-1}$ in
eq.(\ref{compos1}) we get 
$$
\psi_{\tilde w \bullet\circ}(\bz)|_{z_N=c^{-1}}= \left(
T_{N-1}^{(N-1)}(tc)\right)^{-1} \psi_{\tilde w
  \circ\circ}(\bz)|_{z_N=c^{-1}}. 
$$
It remains to fix the proportionality constant $K_R(c)$. This 
is given by the coefficient of the monomial
$\prod_{j=1}^kz_j^{-1}$ in   
$\psi_{\ast^{m}\circ^{k}
  }(\bz;a,b,c,d)|_{z_N=c^{-1}}$, since  the same monomial
has coefficient $1$ in $\psi_{\ast^{m}\circ^{k-1}
  }(\bz_{\widehat N};a,b,tc,d)$. We get
$$
K_R(c) = c+ t^m h_1(a,b,t^m c,t^m d|t) =
\frac{(-1+act^m)(-1+bct^m)d}{-1+abcdt^{2m}}. 
$$ 
In a similar way one proves the identities for the specializations
$z_1=a,b$. 
\end{proof}
Since $\psi_{w\circ}(\bz;a,b,c,d)$  and $\psi_{w\bullet}(\bz;a,b,c,d)$
are polynomials of degree $1$ in $z_N$ and $z_N^{-1}$ respectively,
they are recursively determined by the two equations (\ref{recurs3}). 
At the level of partition function the recursions
(\ref{recurs1},\ref{recurs3}) imply
\begin{multline}\label{rec-part-func}
Z_{N,m}(\bz;a,b,c,d)|_{z_N=c^{-1}}= Z_{N-1,m-1}(\bz_{\widehat N};a,b,tc,td)+\\
\frac{(1-act^m)(1-bct^m)(1-dc)}{c(1-abcdt^{2m})}
Z_{N-1,m-1}(\bz_{\widehat N}a,b,tc,d).
\end{multline}
By using the integral formula (\ref{int-form-MD-Koornw}) in the
previous equation, we derive a contiguous relation for
Askey-Wilson polynomials which is not difficult to prove by direct
means\footnote{Several  contiguous/difference relations for
Askey-Wilson polynomials emerging from our discussion are presented in
Appendix \ref{AW-section}.}
\begin{multline}\label{cont12-mt}
(1-cdt^m)P_m(x;a,b,c,d|t)-(1-cd)P_m(x;a,b,tc,d|t)\\-cd(1-t^m)(1-abt^{m-1})(z+z^{-1}-d-d^{-1})
P_{m-1}(x;a,b,tc,td|t)=0. 
\end{multline}
On the other hand eq.(\ref{rec-part-func}) and its analog for
$z_N=d^{-1}$, determine $Z_{N,m}(\bz;a,b,c,d)$ uniquely. Therefore
assuming eq.(\ref{cont12-mt}), we could obtain a proof of the integral formula
(\ref{int-form-MD-Koornw}), without employing Mimachi's formula.

Another consequence of the recursions (\ref{recurs3},\ref{recurs4})
concerns the coefficients $h_n$ in the expansion of
$\psi_{\circ^k\ast^m}$. Indeed by applying
eqs.(\ref{recurs3},\ref{recurs4}) to the case $w=\circ^{n}$ we get 
$$
h_{n+1}(a,b,c,d)+c h_{n}(a,b,c,d)=-\frac{(1-ac)(1-bc)d}{(1-abcd)},
h_{n}(a,b,tc,d) 
$$
$$
h_{n+1}(a,b,c,d)+a^{-1} h_{n}(a,b,c,d)=\frac{(1-ac)(1-ad)a^{-1}}{1-abcd}h_{n}(ta,b,c,d).
$$
Actually these two equations are equivalent thanks to the duality
relation (\ref{inv-h_i}). The previous two equations can be written
as contiguous or difference equations for Al-Salam Chihara polynomials   
\begin{multline}
aQ_{n+1}(z;a,b|t)-(1-abt^n)Q_{n}(z;a,b|t)\\-a(z+z^{-1}-a-a^{-1})Q_{n}(z;ta,b|t)=0,
\end{multline}
\begin{multline}
Q_{n+1}(z;a,b|t)-z(1-abt^n)Q_{n}(z;a,b|t)\\-t^{n/2}\frac{(1-az)(1-bz)}{z}Q_{n}(t^\fr12
z;t^\fr12 a,t^\fr12 b|t)=0.
\end{multline}
Both equations can be lifted to contiguous or difference relations for Askey-Wilson
polynomials (see Appendix \ref{AW-section}).

\section{Density and current}\label{sect-phase}

In order to compute physical quantities, like average density of first
class particles or current, we need to specialize the spectral
parameters  $\bz= {\bf 1}$. 
While for the density this is straightforward 
\be\label{av-dens}
\langle \rho^\bullet_{N,m}\rangle = \left(\frac{1}{N}\frac{\partial}{\partial \xi} \log
Z_{N,m}(\xi;{\bf 1})\right){\Big |}_{\xi=0},
\ee
for the steady current $\langle J_{N,m}\rangle$ we need to make an intermediate step and define
an inhomogeneous version of that quantity
\be
J_{N,m}(\bz):= \frac{1}{Z_{N,m}(\bz)}\sum_{w\in \cQ(N,m)} \alpha
\psi_{\circ w}(\bz)-\gamma \psi_{\bullet w}(\bz),
\ee
which under specialization $\bz= {\bf 1}$ reduces to the homogeneous case,
$$
\langle J_{N,m}\rangle~=~J_{N,m}({\bf 1}).$$
Using the fact that $\alpha/\gamma=-ab$ and the first of
eqs.(\ref{recurs2}), we get an explicit formula for $J_{N,m}(\bz)$
\be
J_{N,m}(\bz) = -\gamma \frac{(a-z_1)(b-z_1)}{z_1}
\frac{Z_{N-1,m}(\bz\setminus z_1)}{Z_{N,m}(\bz)}. 
\ee
After specialization $\bz ={\bf 1}$, we obtain
\be\label{av-curr}
\langle J_{N,m} \rangle = (t^\fr12 -t^{-\fr12})
\frac{Z_{N-1,m}({\bf 1})}{Z_{N,m}({\bf 1})}.
\ee

\subsection{Thermodynamic limit}\label{asympt-1}

Since the stationary current
and the average first particle density have been expressed in terms of the weighted
partition function, in order to determine the behavior of
such quantities in the thermodynamic limit we need to work out the
asymptotic behavior of $Z_{N,m}(\xi^2;{\bf 1})$ when $N$ and $m$ tend
to infinity at fixed density of second class particles $\rho_\ast=m/N$.

If we assume $t<1$, we can use eq.(\ref{integral-Z}). In that equation
the prefactor $r_m^{-1}(a_\xi,b_\xi,c_\xi,d_\xi|t)$ for $m\rightarrow \infty$ goes to a
constant $\kappa$, therefore 
we just need to consider the
integral, which in the homogeneous specialization reads
\begin{multline}
\oint_\cC
\frac{dx}{4\pi i
  x}\frac{(x^2,x^{-2};t)_\infty (\xi+\xi^{-1}-x-x^{-1})^N p_{m}(x;a_\xi,b_\xi,c_\xi,d_\xi|t)}{(a_\xi
  x,a_\xi x^{-1},b_\xi x,b_\xi x^{-1},c_\xi x,c_\xi x^{-1},d_\xi
   x,d_\xi x^{-1};t)_\infty} 
\end{multline}

For the large $N,m$ we can use the asymptotic formula
eq.(\ref{asympt-AW}) for Askey-Wilson polynomials and we arrive at
\be
Z_{N,m}(\xi^2;{\bf 1}) \simeq k \xi^{N-m}\oint_\cC\frac{dx}{4\pi i
  x}\frac{(x^{-2};t)_\infty (\xi+\xi^{-1}-x-x^{-1})^N
  x^{-m}}{(a_\xi x^{-1},b_\xi x^{-1},c_\xi x^{-1},d_\xi
  x^{-1};t)_\infty}  
\ee
For $\xi \sim 1$ the function 
$$
f(x)= \log(\xi+\xi^{-1}-x-x^{-1})-\rho_\ast \log(x)
$$  
has a saddle point at
\begin{align}
x_\xi= x_0+ O((\xi-1)),&&x_0=\frac{1+\rho_\ast}{-1+\rho_\ast}\leq -1.
\end{align}
Now recall that for $0<t<1$ the parameters $a,b,c,d$ span the range
$a,c<0$ and  $0<b,d<1$.  When $x_0< a,c< 0$ we can deform the contour of
integration to the steepest descent path that passes through
$x(\rho_\ast;\xi)$ and we easily find 
\begin{multline}\label{asymp1}
Z_{N,m}(\xi^2;{\bf 1}) \simeq
\frac{k  \xi^{N-m}}{\sqrt{8\pi
    f''(x_\xi)N}}\frac{(x_\xi^{-2};t)_\infty
  (\xi+\xi^{-1}-x_\xi-x_\xi^{-1})^N 
  x_\xi^{-m}}{(a_\xi x_\xi^{-1},b_\xi x_\xi^{-1},c_\xi x_\xi^{-1},d_\xi
  x_\xi^{-1};t)_\infty}.
\end{multline}
When at least one among $a$ and $c$ is less then $x_0$ then the
dominating contribution comes from the pole 
around $a_\xi$ if $a<c$ or around $c_\xi$ if $c<a$, and we get
\be\label{asymp2}
Z_{N,m}(\xi^2;{\bf 1}) \simeq \left\{
\begin{array}{cc}
\frac{k  \xi^{N-m}(a_\xi^{-2};t)_\infty
  (\xi+\xi^{-1}-a_\xi-a_\xi^{-1})^N 
  a_\xi^{-m}}{2a_\xi(t,b_\xi a_\xi^{-1},c_\xi a_\xi^{-1},d_\xi
  a_\xi^{-1};t)_\infty}& a<c,x_0\\[17pt] 
 \frac{k  \xi^{N-m}(c_\xi^{-2};t)_\infty
   (\xi+\xi^{-1}-c_\xi-c_\xi^{-1})^N 
  c_\xi^{-m}}{2c_\xi(t,b_\xi c_\xi^{-1},a_\xi c_\xi^{-1},d_\xi
  c_\xi^{-1};t)_\infty}& c<a,x_0
\end{array}
\right.
\ee

Let us now call $J_{\rho_\ast}$ and $\rho^\bullet_{\rho_\ast}$ the
steady state current and density of first class particle in the
thermodynamic limit. Combining eqs.(\ref{av-dens},\ref{av-curr}) with
eqs.(\ref{asymp1},\ref{asymp2}) we obtain 
\begin{align}
J_{\rho_\ast} =\frac{(t^\fr12 -t^{-\fr12})(1-\rho_\ast^2)}{4},  &&
\rho^\bullet_{\rho_\ast}=\frac{1-\rho_\ast}{2},  && \textrm{for}&& x_0<a,c\\
J_{\rho_\ast} =\frac{a(t^{-\fr12}-t^\fr12)}{(1-a)^2},  &&
\rho^\bullet_{\rho_\ast}=\frac{a}{a-1}-\rho_\ast,  && \textrm{for}&& a<x_0,c\\
J_{\rho_\ast} =\frac{c(t^{-\fr12}-t^\fr12)}{(1-c)^2},  &&
\rho^\bullet_{\rho_\ast}= \frac{1}{1-c}, && \textrm{for}&& c<x_0,a
\end{align}
Upon redefinition $q=t^{-1}, a\leftrightarrow -a^{-1}, d\leftrightarrow
-d^{-1},b\leftrightarrow-c^{-1}$ our results  are in agreement with \cite{uchiyama2008two}.

\section{Conclusion}

In this paper we have analyzed the steady state of the Asymmetric
Simple Exclusion process 
with open boundaries and second class particles by deforming it
through the introduction of spectral parameters, in a way which is
dictated by the integrable structure of the model. The (unnormalized)
probabilities of the particle configurations get promoted to Laurent
polynomials in the spectral parameters, that are constructed in
terms of non-symmetric Koornwinder polynomials. Moreover we have shown
that the partition function coincides with a symmetric
Macdonald-Koornwinder polynomial. As an outcome we have computed the
steady current and the average density of first class particles. 
It is interesting to pursue further the analysis of the inhomogeneous
model and compute other quantities like correlation functions.

In a recent preprint \cite{corteel2015macdonald} Corteel and Williams
have uncovered a different connection between open 2ASEP with open
boundaries a second class particles and the theory of Koornwinder polynomials. It
would be extremely interesting to investigate the relation of their
findings with the algebraic structure presented in our paper.

It is clear that our approach can be applied to other stochastic
interacting particle systems  as soon as one recognizes that the
Markov matrix is integrable. 
The most straightforward generalization of
the present paper would consist in a system with more than two species
of particles and exchange rates that arise from a higher rank quotient
of the affine Hecke algebra of type $\tilde \cC_N$. 
A more interesting class of systems that can be treated with our
approach \cite{cantini-open2}, arise from the
classification of the stochastic boundary scattering matrices of
Crampe~et~al. \cite{crampe2014integrable}. 
One could also consider systems with periodic
boundary conditions. While for just a single class of particles the
steady state is simply uniform, this is no longer true if 
one allows for more classes of particles, and the nontrivial steady
state can be analyzed using the approach presented here
\cite{cantini2015matrix,cantini-TASEP}.

\appendix

\section{Koornwinder, Macdonald-Koornwinder and Askey-Wilson polynomials}\label{algebraic}

In this Appendix we shall the definitions of Koornwinder,
Macdonald-Koornwinder and Askey-Wilson polynomials, and a few of their
properties we needed in the paper. 

\subsection{Nonsymmetric Koornwinder polynomials}

In order to introduce the Non-symmetric Koornwinder polynomials one  
considers the action, in the Noumi representation,  of the commutative
sub-algebra $\cY_N$ on $\bC[z_1^{\pm 1},\dots,z_N^{\pm 1}]$. It turns
out that such action is triangular with respect to the order
$\preceq$ on the monomials  
\cite{stokman2000koornwinder}
\be\label{triangular}
Y_i \bz^\alpha  = \omega_i(\alpha) \bz^\alpha +  \sum_{\bz^\beta
  \prec ~\bz^\alpha} \tilde c_\beta \bz^\beta.
\ee
in order to write the eigenvalues $\omega_i(\alpha)$ we need to
introduce $w_\alpha$, which is the minimal length element in
$\cW_N^{\pp 0}$ such that $w_\alpha 
\alpha^+=\alpha$ and $\rho=\{1,2,\dots,N\} \in \bZ$. Then we can write  
$$
\omega_i(\alpha) = q^{\alpha_i}\left(\sqrt{q^{-1}abcd}\pp t^{N-|(w_\alpha
  \rho)_i|}\right)^{\textrm{sign}((w_\alpha
  \rho)_i)}
$$
The  common eigenfunctions of $\cY_N$ are
the \emph{non-symmetric Koornwinder polynomials} $E_\alpha(\bz)$
\cite{sahi1999nonsymmetric}
\be\label{def-nonsymm-mcd}
\begin{split}
E_\alpha(\bz) &= \bz^\alpha + \sum_{\bz^\beta \prec ~\bz^\alpha} c_\beta \bz^\beta,\\
Y_i E_\alpha(\bz) &= \omega_i(\alpha)E_\alpha(\bz).
\end{split}
\ee
\vskip .5cm
\subsection{Macdonald-Koornwinder polynomials}\label{symm-macd}

For a partition $\lambda \in \bZ^N$, call $H_\lambda$ the subspace of
$\bC[z_1^\pm,\dots,z_N^\pm]$ generated by the polynomials
$E_\alpha(\bz)$ such that $\alpha^+=\lambda$ is preserved by the
action of $\cH_N$ and for generic values of the parameters
$a,b,c,d,q,t$ it forms an irreducible representation. The subspace
$H_\lambda^{\cW_N^{\pp 0}}$, invariant under the action of $\cW_N^{\pp 0}$  is
one-dimensional and is generated by $P_\lambda(\bz)$, the
\emph{symmetric Macdonald-Koornwinder polynomial} in $N$ variables
$\bz$, associated to the partition $\lambda$
\cite{koornwinder1992askey, sahi1999nonsymmetric}. 
Before providing a definition and discussion of
Macdonald-Koornwinder polynomials let us state one of the main results we need.
\begin{theorem}[\cite{sahi1999nonsymmetric}, Corollary 6.6]\label{theo-sahi}
Let $\chi:\cH_N\rightarrow \bC$ be the one dimensional representation
of  $\cH_N$ defined on the generators by 
$$
\chi(T_i)=t_i^{\fr12}.
$$
Then we have
\be
P_{\alpha^+}(\bz) \propto \sum_{w\in W_N^{0}} \chi(T_w)T_w E_\alpha(\bz)
\ee
\end{theorem}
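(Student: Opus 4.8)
The plan is to realize the right-hand side as the image of $E_\alpha$ under the $\chi$-symmetrizer of the finite Hecke algebra and to identify that image with the one-dimensional space $H_\lambda^{\cW_N^{\pp 0}}$, where $\lambda=\alpha^+$. Set $\Sigma_\chi:=\sum_{w\in\cW_N^{\pp 0}}\chi(T_w)\,T_w\in\cH_N^{\pp 0}$. The first step is the standard symmetrizer identity $T_i\Sigma_\chi=\Sigma_\chi T_i=t_i^{\fr12}\Sigma_\chi$ for every finite generator $T_i$ ($1\le i\le N$): grouping the sum over $\cW_N^{\pp 0}$ into pairs $\{w,s_iw\}$ with $\ell(s_iw)>\ell(w)$ and using the quadratic relation $T_i^2=1+(t_i^{\fr12}-t_i^{-\fr12})T_i$ together with $\chi(T_{s_iw})=t_i^{\fr12}\chi(T_w)$ gives the claim for left multiplication, and the anti-automorphism $T_w\mapsto T_{w^{-1}}$ (which fixes $\Sigma_\chi$) yields it for right multiplication. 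Consequently $\Sigma_\chi h=\chi(h)\,\Sigma_\chi$ for all $h\in\cH_N^{\pp 0}$, where $\chi$ also denotes the one-dimensional character extended linearly.

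Second, I would show $\Sigma_\chi E_\alpha\in H_\lambda^{\cW_N^{\pp 0}}$. It lies in $H_\lambda$ because $H_\lambda$ is stable under $\cH_N$, hence under $\cH_N^{\pp 0}$. By the identity above, $\widehat T_i(\Sigma_\chi E_\alpha)=t_i^{\fr12}\,\Sigma_\chi E_\alpha$ in the Noumi representation for every finite $i$; inserting the explicit form $\widehat T_i=t^{\fr12}-(t^{\fr12}z_i-t^{-\fr12}z_{i+1})\partial_i$ (and the analogous $\widehat T_N$) this forces $\partial_i(\Sigma_\chi E_\alpha)=0$, since the polynomial coefficient multiplying $\partial_i$ is a nonzero element of the integral domain $\bC[z_1^{\pm1},\dots,z_N^{\pm1}]$; that is, $\Sigma_\chi E_\alpha$ is invariant under $s_i$, hence under all of $\cW_N^{\pp 0}$. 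As $H_\lambda^{\cW_N^{\pp 0}}=\bC\,P_\lambda$, we get $\Sigma_\chi E_\alpha=c\,P_{\alpha^+}$ for some constant $c$.

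Third, the only remaining point is $c\neq0$. For this I would invoke that the $E_\beta$ with $\beta^+=\lambda$ are obtained from $E_\alpha$ by the Knop--Sahi intertwining operators, each of which is $T_i$ plus a scalar determined by the $Y$-eigenvalues and therefore lies in $\cH_N^{\pp 0}$; hence $\cH_N^{\pp 0}E_\alpha=H_\lambda$. Combined with $\Sigma_\chi h=\chi(h)\Sigma_\chi$ this gives $\Sigma_\chi H_\lambda=\bC\,\Sigma_\chi E_\alpha$. On the other hand $P_\lambda$ is $\cW_N^{\pp 0}$-invariant, so the easy direction of the Noumi dictionary gives $\widehat T_i P_\lambda=t_i^{\fr12}P_\lambda$ for all finite $i$, whence $\Sigma_\chi P_\lambda=\big(\sum_{w\in\cW_N^{\pp 0}}\chi(T_w)^2\big)P_\lambda$; the scalar here is the multi-parameter Poincar\'e polynomial of $\cW_N^{\pp 0}$ in $t$ and $t_N$, a product of $t$-integers, nonzero for generic parameters. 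Hence $0\neq P_\lambda\in\Sigma_\chi H_\lambda=\bC\,\Sigma_\chi E_\alpha$, forcing $c\neq0$ and $P_{\alpha^+}\propto\Sigma_\chi E_\alpha$. The main obstacle is precisely this nonvanishing step: the rest is formal manipulation in $\cH_N^{\pp 0}$, while $c\neq0$ relies on knowing that $E_\alpha$ generates the whole isotypic block $H_\lambda$ under the finite Hecke algebra (the intertwiner fact) and that the relevant Poincar\'e polynomial is nonzero at the chosen parameters.
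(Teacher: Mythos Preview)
The paper does not give a proof of this statement: it is quoted verbatim as Corollary~6.6 of \cite{sahi1999nonsymmetric} and used as a black box in the proof of Theorem~\ref{theo-mac-koorn}. So there is no ``paper's own proof'' to compare your attempt against.

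That said, your argument is essentially the standard one and is sound for generic parameters. The symmetrizer identity $T_i\Sigma_\chi=t_i^{\fr12}\Sigma_\chi$ is exactly the computation one finds in any treatment of Hecke-algebra symmetrizers, and the passage from $\widehat T_i f=t_i^{\fr12}f$ to $\cW_N^{\pp 0}$-invariance via the explicit Noumi form is clean. Your nonvanishing argument is also correct in spirit: invoking the Knop--Sahi intertwiners to obtain $\cH_N^{\pp 0}E_\alpha=H_\lambda$, and then hitting $P_\lambda$ with $\Sigma_\chi$ to produce a nonzero element of $\Sigma_\chi H_\lambda$, is precisely how Sahi's proof proceeds (he phrases it slightly differently, writing $P_\lambda$ as the image of $E_{-\lambda^+}$ under a product of raising intertwiners and then symmetrizing, but the content is the same). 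The caveat you already flag is the right one: both the cyclicity of $E_\alpha$ in $H_\lambda$ and the nonvanishing of the Poincar\'e scalar require the parameters to be generic, which is implicitly assumed throughout the paper anyway.
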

For a fixed $N$, the Macdonald-Koornwinder polynomials are Laurent
polynomials in $N$ variables, invariant 
under $\cW_N^{\pp 0}$, which  
are eigenvectors of the following q-difference operator
$$
{\mathcal D}_{q,t}= \sum_{i=1}^N
\Phi_i(z_i)(T_{q,z_i}-1)+\Phi_i(z_i^{-1})(T^{-1}_{q,z_i}-1) 
$$ 
where $T_{q,z_i}$ is the $i$-th q-shift operator
$$
T_{q,z_i}f(z_1,\dots,z_i,\dots,z_N)= f(z_1,\dots,qz_i,\dots,z_N)
$$
and 
$$
\Phi_i(z)=
\frac{(1-az)(1-bz)(1-cz)(1-dz)}{(1-z^2)(1-qz^2)}\prod_{\substack{j=1\\j\neq
    i}}^N\frac{(1-tzz_j)(1-tzz_j^{-1})}{(1-zz_j)(1-zz_j^{-1})}
$$
The Macdonald-Koornwinder polynomials in $N$ variables,
$P_\lambda(\bz)$, are labeled by  partitions $N$ parts, which
parametrize the eigenvalue of ${\mathcal D}_{q,t}$, i.e. the
polynomials $P_\lambda(\bz)$ are characterized by the equations
$$
{\mathcal D}_{q,t}P_\lambda(\bz) = d_\lambda P_\lambda(\bz)
$$
with 
$$
d_\lambda =\sum_{i=1}^N
\left[q^{-1}abcdt^{2n-i-1}(q^{\lambda_i}-1)+t^{i-1}(q^{-\lambda_i}-1)
  \right] 
$$
together with the condition that the coefficient of
$\bz^\lambda$  in $P_\lambda(\bz)$ is $1$. Of course the
Macdonald-Koornwinder polynomials depend on the parameter
$a,b,c,d,t,q$ but we shall write this dependence explicitly
(writing $P_\lambda(\bz)=P_\lambda(\bz;a,b,c,d;q;t)$) only when needed.

An important property of the Macdonald-Koornwinder polynomials (which
could be used actually to give an alternative definition) is their
orthogonality  with respect to a certain scalar product.
Let assume for convenience $|q|<1$ and recall the usual notation for
the q-Pochhammer symbol  
\begin{align}
(a;q)_n&=\prod_{i=0}^{n-1}(1-a
q^i) 
&(a_1,a_2\dots,a_j;q)_n=\prod_{i=1}^j(a_i;q)_n.&
\end{align}
Let us define the kernel
\be
\Delta_N(t,q;\bz) = \prod_{\substack{1\leq i<j\leq
    N\\\epsilon_1,\epsilon_2=\pm1}}\frac{(z_i^{\epsilon_1}z_j^{\epsilon_2};q)_\infty}{(t
  z_i^{\epsilon_1}z_j^{\epsilon_2};q)_\infty} \prod_{i=1}^N
\frac{(z_i^{2\epsilon};q)_\infty}{(az_i^{\epsilon}, bz_i^{\epsilon},
  cz_i^{\epsilon}, dz_i^{\epsilon};q)_\infty},  
\ee
and the scalar product on $\bC[z_1^\pm,\dots,z_N^\pm]$
$$
\langle p,q\rangle:= \oint_\cC \prod_{j=1}^N\frac{dz_j}{4\pi i z_j}
\Delta_N(t,q;\bz) p(\bz)q(\bz^{-1}) 
$$
where the contour of integration $\cC$ encircles the poles in
$aq^k,bq^k,cq^k,dq^k$ ($k\in \bZ_+$) and excludes all the others. 
The Macdonald-Koornwinder polynomials corresponding to different
partitions are orthogonal with respect to such scalar product
\begin{align}
&\langle P_\lambda,P_\mu\rangle =0 &\textrm{if}&&\lambda\neq \mu.
\end{align}

When the partition $\lambda=\{m\}$, i.e. when $\lambda$ consists of
just a single part of length $m$, the
Macdonald-Koornwinder polynomials are polynomials in one variable
$z$ independent of the variable $t$. They correspond (up to a factor and change of
variables $x=(z+z^{-1})/2$) to the Askey-Wilson polynomials
$p_m(z;a,b,c,d|q)$\footnote{In the literature the Askey-Wilson
  polynomials are usually considered as function of $x=\frac{z+z^{-1}}{2}$.}
\cite{askey1985some,gasper2004basic}
\be
P_{\{m\}}(\bz) \propto p_m(z;a,b,c,d|q).
\ee

\subsection{Askey-Wilson polynomials}\label{AW-section}
Let us recall the definition of the Askey-Wilson polynomials in terms of a terminating  
hypergeometric function 
\be
p_n(z;a,b,c,d|q) = \frac{(ab,ac,ad;q)_n}{a^n} {}_4 \phi_3
\left(\begin{array}{c} q^{-n},q^{n-1}abcd,az,az^{-1}\\
    ab,ac,ad\end{array};q,q\right) 
\ee
where the basic hypergeometric functions ${}_r \phi_s $ are defined by 
\be
{}_r \phi_s \left(\begin{array}{c}a_1,a_2,\dots,a_s\\ b_1,\dots,
b_r\end{array};q,z\right) = \sum_{n=0}^\infty
\frac{(a_1,\dots,a_r;q)_n}{(q,b_1,\dots,b_s;q)_n} \left[(-1)^nt^{\binom{n}{2}}\right]^{s+1-r}z^n
\ee
As for the case of Macdonald-Koornwinder
polynomials, whenever the parameters $a,b,c,d$ are clear
from the context we shall use a lighter notation writing
$p_n(z|q)=p_n(z;a,b,c,d|q)$. The Askey-Wilson polynomials satisfy 
orthogonality relations that correspond to the case $N=1$ of the
orthogonality relation for Koornwinder-Macdonald polynomials. 
Indeed for $N=1$ the kernel $\Delta_1(t,q;\bz)$ becomes $t$ independent and
coincides with the Askey-Wilson kernel 
\be
w(z;a,b,c,d|q)=\frac{(z^2,z^{-2};q)_\infty}{
  (az,az^{-1},bz,bz^{-1},cz,cz^{-1},dz,dz^{-1};q)_\infty}   
\ee
and one has
\be
\oint_{\cC} \frac{dz}{4\pi
  iz}w(z|q)p_n(z|q)p_m(z|q) = h_n \delta_{n,m}.
\ee
where 
$$
h_n=\frac{(q^{n-1}abcd;q)_\infty}{(1-q^{2n-1}abcd)(q^{n+1},q^nab,q^nac,q^nad,q^nbc,q^nbd,q^ncd;q)_\infty}.
$$
An important information we need in the paper is the asymptotic form
of $p_m(z|q)$ for large values of the label $m$
\cite[Eq.7.5.13]{gasper2004basic} 
\be\label{asympt-AW}
p_m(z|q) \sim z^m B(z^{-1})+z^{-m}B(z)
\ee
with 
$$
B(z)= \frac{(az,bz,cz,dz;q)_\infty}{(z^2;q)_\infty}.
$$
The specialization $c=d=0$ of the Askey-Wilson polynomials gives the
so called Al-Salam Chihara polynomials
\cite{al1976convolutions,koekoek1996askey}  
\be
Q_n(z;a,b|t) = p_n(z;a,b,0,0|t)
\ee
The Al-Salam Chihara polynomials satisfy the following three terms
recursion for  
\be\label{al-salam-chi-rec}
Q_{n+1}(z)+\left((a+b)t^n-2z\right)Q_{n}(z)+(1-t^n)(1-abt^{n-1})Q_{n-1}(z)=0.
\ee
The recursion relations presented in Section \ref{sect-rec-rel}
motivate the presentation of several contiguous/difference relations for Askey-Wilson
polynomials, which are quite straightforward to prove. They are best stated using divided difference operators
$$
\partial_{x,y}F(x,y):=
\frac{F(x,y)-F(y,x)}{(x-y)},~~~~\partial_{x}F(x):=
\frac{F(x)-F(x^{-1})}{(x-x^{-1})}. 
$$
and using the variable $z$ instead of $x= \frac{z+z^{-1}}{2}$
\begin{align*}
P_m(z;a,b,c,d)&=\partial_{c,d}\left(\omega^{(m,m)}(z;a,b,c,d)
P_m(z;a,b,qc,d)\right)\\
P_{m-1}(z;a,b,qc,qd)&=\partial_{c,d}\left(\omega^{(m-1,m)}(z;a,b,c,d)
P_m(z;a,b,qc,d)\right) \\
P_{m+1}(z;a,b,c,d)
&=\partial_{c,d}\left(\omega^{(m+1,m)}(z;a,b,c,d)
  P_m(z;a,b,qc,d) \right)\\
P_{m}(z;a,b,c,qd) &=
\partial_{z} \left(\theta^{(m,m)}(z;a,b,c,d) P_{m}(q^\fr12
z;q^\fr12 a,q^\fr12 b,q^\fr12 c,q^\fr12 d)\right) \\
P_{m+1}(z;a,b,c,d)&= \partial_{z} \left(\theta^{(m+1,m)}(z;a,b,c,d) P_{m}(q^\fr12
z;q^\fr12 a,q^\fr12 b,q^\fr12 c,q^\fr12 d)\right)
\end{align*}
with
\begin{align*}
\omega^{(m,m)}(z;a,b,c,d)&= \frac{cd(z-c)(zc-1)}{(1-cdq^m)zc}\\
\omega^{(m-1,m)}(z;a,b,c,d)&=\frac{1}{(1-q^m)(1-abq^{m-1})}\\
\omega^{(m+1,m)}(z;a,b,c,d)&=
\frac{(1-adq^m)(1-bdq^m)(z-c)(zc-1)}{z}\\
\theta^{(m,m)}(z;a,b,c,d)&=
\frac{(abcq^m-z)(1-az)(1-bz)(1-cz)}{q^{\frac{m}{2}}(1-abq^m)(1-acq^m)(1-bcq^m)z^2}\\ 
\theta^{(m+1,m)}(z;a,b,c,d)&=-\frac{(1-az)(1-bz)(1-cz)(1-dz)}{q^{\frac{m}{2}}z^2}.
\end{align*}

\subsection{Cauchy formula for Macdonald-Koornwinder polynomials}
We conclude this Appendix with the Cauchy formula for
Macdonald-Koornwinder polynomials, 
due to Mimachi \cite{mimachi2001duality}. Let  
\be
\Pi(\bz,\bx) = \prod_{\substack{1\leq i \leq N\\1\leq j \leq M}}
  (z_i+z_i^{-1}-x_j-x_j^{-1})
\ee 
and for a partition $\lambda\subseteq M^N$, i.e. a partition made of $N$ parts such that
$\lambda_i\leq M$, define $\bar  \lambda$, a partition of $M$ parts
given by
$$
\bar \lambda_j = \#\{i|\lambda_i <N-i \}.
$$
\begin{theorem}[\cite{mimachi2001duality}, Theorem 2.1]
The Macdonald-Koornwinder polynomials $P_\lambda(z)$ satisfy the equality 
\be
\Pi(\bx,\bz) = \sum_{\lambda \subseteq M^N}(-1)^{|\bar \lambda|}
P_\lambda(\bz; q;t)P_{\bar \lambda}(\bx; t;q).
\ee
\end{theorem}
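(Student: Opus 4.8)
The plan is to run the standard ``Cauchy kernel is an eigenfunction'' argument, adapted to the Koornwinder setting. First I would record that, as a function of $\bz$, the kernel $\Pi(\bz,\bx)=\prod_{i=1}^N\prod_{j=1}^M(z_i+z_i^{-1}-x_j-x_j^{-1})$ is a $\cW_N^{\pp 0}$-invariant Laurent polynomial whose degree in each $z_i+z_i^{-1}$ is exactly $M$; hence it lies in the linear span of the $P_\lambda(\bz;q;t)$ with $\lambda\subseteq M^N$, and one may write uniquely
\[
\Pi(\bz,\bx)=\sum_{\lambda\subseteq M^N}P_\lambda(\bz;q;t)\,c_\lambda(\bx).
\]
By the symmetry of $\Pi$ in the $\bx$ variables, each $c_\lambda(\bx)$ is a $\cW_M^{\pp 0}$-invariant Laurent polynomial of degree at most $N$ in each $x_j+x_j^{-1}$, so it lies in the span of $\{P_\mu(\bx;t;q):\mu\subseteq N^M\}$; the goal is to prove $c_\lambda(\bx)=(-1)^{|\bar\lambda|}P_{\bar\lambda}(\bx;t;q)$.

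The core step is a kernel intertwining relation for the Koornwinder $q$-difference operator of Section \ref{symm-macd}. Write $\mathcal D^{(\bz)}_{q,t}$ for that operator acting on the $N$ variables $\bz$, and $\mathcal D^{(\bx)}_{t,q}$ for the analogous operator with $q$ and $t$ interchanged acting on the $M$ variables $\bx$, keeping the same boundary parameters $a,b,c,d$ in both. I would prove
\[
\mathcal D^{(\bz)}_{q,t}\,\Pi(\bz,\bx)=\bigl(\mathcal D^{(\bx)}_{t,q}+\kappa_{N,M}\bigr)\Pi(\bz,\bx)
\]
for a constant $\kappa_{N,M}$ depending only on $N,M,a,b,c,d,q,t$. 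This is a direct if laborious verification: one lets both sides act on the product form of $\Pi$, uses that a single $q$-shift $z_i\mapsto qz_i$ alters each factor $(z_i+z_i^{-1}-x_j-x_j^{-1})$ in a controlled way, and checks that the rational prefactors $\Phi_i$ of $\mathcal D_{q,t}$, through their partial-fraction structure, reproduce exactly the contributions of the $t$-shifts $x_j\mapsto tx_j$; the fact that $\Pi$ is a genuine polynomial (the box constraint) is what makes the two finite sums match term by term. Granting this, apply $\mathcal D^{(\bz)}_{q,t}$ to the expansion: the left side is $\sum_\lambda d_\lambda\,P_\lambda(\bz;q;t)\,c_\lambda(\bx)$, the right side is $\sum_\lambda P_\lambda(\bz;q;t)\bigl(\mathcal D^{(\bx)}_{t,q}+\kappa_{N,M}\bigr)c_\lambda(\bx)$, and comparing coefficients of the linearly independent $P_\lambda(\bz;q;t)$ shows that each $c_\lambda(\bx)$ is an eigenfunction of $\mathcal D^{(\bx)}_{t,q}$ with eigenvalue $d_\lambda-\kappa_{N,M}$. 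One then checks the dual eigenvalue identity $d_\lambda(q,t;N)-\kappa_{N,M}=d_{\bar\lambda}(t,q;M)$ from the explicit eigenvalue formula of Section \ref{symm-macd} together with the fact that $\bar\lambda$ is the conjugate of the complement of $\lambda$ in the box $M^N$; since for generic parameters the eigenvalues $d_\mu(t,q;M)$, $\mu\subseteq N^M$, are pairwise distinct, $c_\lambda(\bx)$ must be a scalar multiple of $P_{\bar\lambda}(\bx;t;q)$.

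To fix the scalar I would compare the coefficient of the extreme monomial $z_1^{\lambda_1}\cdots z_N^{\lambda_N}\,x_1^{\bar\lambda_1}\cdots x_M^{\bar\lambda_M}$ on the two sides. On the right it comes only from $P_\lambda(\bz;q;t)$ times the leading term of $P_{\bar\lambda}(\bx;t;q)$, each normalised to coefficient $1$ — a short triangularity check (using $|\lambda|+|\bar\lambda|=MN$) rules out the other summands — while on the left this monomial is obtained from $\prod_{i,j}(z_i+z_i^{-1}-x_j-x_j^{-1})$ by a unique selection of terms in the factors, whose coefficient one computes to be $(-1)^{|\bar\lambda|}$. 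This yields $c_\lambda(\bx)=(-1)^{|\bar\lambda|}P_{\bar\lambda}(\bx;t;q)$ and hence the stated identity. (Alternatively, once proportionality is known the constant can be read off by specialising the $\bx$ to a principal specialisation at which $P_{\bar\lambda}(\bx;t;q)$ is explicitly nonzero and evaluating $\Pi$ there.)

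I expect the main obstacle to be the kernel intertwining identity $\mathcal D^{(\bz)}_{q,t}\Pi=(\mathcal D^{(\bx)}_{t,q}+\kappa_{N,M})\Pi$: matching the rational coefficients $\Phi_i$ of the Koornwinder operator against both the $q$-shifted products in $\bz$ and the $t$-shifted products in $\bx$ requires care with the poles and with the finiteness forced by the box. A way to bypass this computation is to expand $\Pi$ first in the $BC$-type interpolation (shifted Koornwinder) polynomials, for which such a Cauchy-type expansion is essentially a restatement of their extra-vanishing property, and then convert to the ordinary $P_\lambda$ by the $q$-binomial connection formula; since that only trades one body of machinery for another, I would attempt the direct operator argument first.
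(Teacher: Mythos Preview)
The paper does not prove this theorem: it is stated in Appendix~\ref{algebraic} purely as a citation of Mimachi \cite{mimachi2001duality}, with no argument supplied. So there is no ``paper's own proof'' to compare against.

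That said, your proposal is essentially Mimachi's original proof. His argument rests precisely on the kernel intertwining identity you state, $\mathcal D^{(\bz)}_{q,t}\Pi=(\mathcal D^{(\bx)}_{t,q}+\kappa_{N,M})\Pi$, together with the eigenvalue matching $d_\lambda(q,t;N)-\kappa_{N,M}=d_{\bar\lambda}(t,q;M)$ and a leading-term comparison to fix the scalar. Your caveats are accurate: the intertwining identity is the only nontrivial computation, and the simple-spectrum assumption on $\mathcal D^{(\bx)}_{t,q}$ restricted to $\{\mu\subseteq N^M\}$ holds for generic parameters, which suffices since both sides are polynomial in the parameters. The alternative route via $BC$-type interpolation polynomials that you mention is Rains' later approach; either works.
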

In the present paper we shall need just the case the case $M=1$, which
combined with the orthogonality property, allows to provide an
integral formula for the Macdonald-Koornwinder polynomials
corresponding to a partition with a single column 
$$
\eta(N,m)=\{\underbrace{1,\dots,1}_{N-m},\underbrace{0,\dots,0}_{m} \} 
$$
involving the Askey-Wilson polynomials of basis $t$ 
\be
P_{\eta(N,m)}(\bz; q;t) = \langle \left(x+x^{-1}\right)^m,p_{m}\rangle^{-1} \oint_\cC
\frac{dx}{4\pi i x} \Pi(\bz,x) w(x|t) p_{m}(x|t) 
\ee
where
\be
\langle \left(x+x^{-1}\right)^m,p_{m}\rangle =
\frac{(abcdq^{2m};q)_\infty}{(q^{m+1},abq^m,acq^m,adq^m,bcq^m,bdq^m,cdq^m;q)_\infty}.  
\ee

\bibliographystyle{amsplain}

\bibliography{bibliografia}

\end{document}